\newcommand{\period}{\ensuremath{T}\xspace}
\newcommand{\T}{\ensuremath{T}\xspace}
\newcommand{\Nfaults}{\ensuremath{N_{\text{faults}}}\xspace}
\newcommand{\ttopt}{\ensuremath{T_{\text{opt}}}\xspace}
\newcommand{\Cr}{\ensuremath{C}\xspace}
\newcommand{\Cp}{\ensuremath{C_{p}}\xspace}
\newcommand{\D}{\ensuremath{D}\xspace}
\newcommand{\R}{\ensuremath{R}\xspace}
\newcommand{\p}{\ensuremath{p}\xspace}
\newcommand{\Nckpt}{\ensuremath{N_{\text{ckpt}}}\xspace}
\newcommand{\recall}{\ensuremath{r}\xspace}
\newcommand{\precision}{\ensuremath{p}\xspace}
\newcommand{\trust}{\ensuremath{q}\xspace}
\newcommand{\muP}{\ensuremath{\mu_{\text{P}}}\xspace}
\newcommand{\muNP}{\ensuremath{\mu_{\text{NP}}}\xspace}
\newcommand{\munew}{\ensuremath{\mu_{\text{e}}}\xspace}
\newcommand{\waste}{\ensuremath{\textsc{Waste}}\xspace}
\newcommand{\esperance}[1]{\mathbb E\left(#1\right)}
\newcommand{\Time}[1][]{\ensuremath{\textsc{Time}_{\text{#1}}}\xspace}
\newcommand{\Waste}[1][]{\ensuremath{\textsc{Waste}_{\text{#1}}}\xspace}
\newcommand{\extr}{\ensuremath{\text{extr}}}
\newcommand{\Tp}{\ensuremath{T_{\textsc{Pred}}}\xspace}
\newcommand{\Ty}{\ensuremath{T_{\textsc{NoPred}}}\xspace}
\newcommand{\Tyafo}{\ensuremath{T_{\newdaly}}\xspace}
\newcommand{\Tlost}[1][]{\ensuremath{T_{\text{lost}#1}}\xspace}
\newcommand{\Te}{\ensuremath{T_{\extr}}\xspace}
\newcommand{\Wregular}{\ensuremath{W_{\mathit{reg}}}\xspace}
\newcommand{\Instant}{\textsc{Instant}\xspace}
\newcommand{\young}{\textsc{Young}\xspace}
\newcommand{\daly}{\textsc{Daly}\xspace}
\newcommand{\newdaly}{\textsc{RFO}\xspace}
\newcommand{\OptimalPrediction}{\textsc{OptimalPrediction}\xspace}
\newcommand{\inexact}{\textsc{InexactPrediction}\xspace}
\newcommand{\bestper}{\textsc{BestPeriod}\xspace}
\newcommand{\periodicignoringI}{\Instant}
\newcommand{\bestnewdaly}{\bestper \newdaly\xspace}
\newcommand{\bestperiodicignoringI}{\bestper\Instant \xspace}
\newcommand{\bestperiodicOptimalprediction}{\bestper \OptimalPrediction\xspace}
\newtheorem{theorem}{Theorem}
\newtheorem{proposition}{Proposition}
\theoremstyle{definition}
\newcommand{\faultbis}[1]{
\draw[<-, color=red] (#1) -- ($(#1)+(0.2,1.2)$) -- ($(#1)+(0.1,1.4)$) --  ($(#1)+(0.2,2)$) node[above, left] {\scriptsize{fault}};
} 
\newcommand{\predfault}[1]{
\draw[<-, color=blue] ($(#1) + (0,1)$) -- ($(#1)+(0.2,1.6)$)  -- ($(#1)+(0.1,1.7)$) -- ($(#1)+(0.2,2)$)  node[above, right] {\scriptsize{Predicted fault}};
} 
\newcommand{\legende}[3]{
\draw[thick, <->] ($(#1)+(0,-0.20)$) -- ($(#1)+(#2,-0.20)$) node[below=-0.5pt, midway] {\scriptsize{#3}};
}
\newcommand{\arrowtime}[2]{
\draw[thick, color=black,->] (0,#1) -- (#2,#1) node[below=-0.5pt, ] {\scriptsize{Time}};
}
\newcommand{\ttrd}{4} 
\begin{document}

\title{Checkpointing algorithms and fault prediction}

\author{Guillaume Aupy$^{1,4}$,Yves Robert$^{1,3,4}$, Fr\'ed\'eric Vivien$^{2,4}$ and Dounia Zaidouni$^{2,4}$\\
 $1.$ \'Ecole Normale Sup\'erieure de Lyon\\
 $2.$ INRIA, France \\
 $3.$ University of Tennessee Knoxville, USA\\
 $4.$ LIP - Université de Lyon - CNRS : UMR5668 - INRIA - \\École Normale Sup\'erieure de Lyon- Université Claude Bernard - Lyon, France\\
 \url{{Guillaume.Aupy | Yves.Robert | Frederic.Vivien | Dounia.Zaidouni}@ens-lyon.fr}
 }

\maketitle
\begin{abstract}
This paper deals with the impact of fault prediction techniques on checkpointing strategies.
We extend the classical first-order analysis of Young and Daly in the presence of a fault prediction system,
characterized by its recall and its precision. In this framework, 
we provide optimal algorithms to decide whether and when to
take predictions into account, and we derive the optimal value of the checkpointing period.
These results allow to analytically assess the
key parameters that impact the performance of  fault predictors at very large scale.
\end{abstract}


\section{Introduction}
\label{sec.intro}

Nowadays, the most powerful High Performance Computing systems
experience about one fault per day~\cite{6264677,TsubameSC12}. Consider
the relative slopes describing the evolution of the reliability of
individual components on one side, and the evolution of the number of
components on the other side:  the reliability of an entire platform is
expected to decrease, due to probabilistic amplification, as its
number of components increases.  Therefore, applications running on
large computing systems have to cope with platform faults. There are
two main approaches. On the one hand, applications can use fault-tolerance
mechanisms such as checkpoint and rollback in order to become
resilient. On the other hand, system administrators can try to predict
where and when faults will strike. Although considerable research has
been devoted to fault
predictors~\cite{Fulp:2008:PCS:1855886.1855891,GainaruIPDPS12,GainaruSC12,LiangZXS07,5958823,5542627},
no predictor will ever be able to predict every fault. Therefore, fault
predictors will have to be used in conjunction with fault-tolerance
mechanisms.

In this paper, we assess the impact of fault prediction techniques on checkpointing strategies.
We assume to have jobs executing on a platform subject to faults,
and we let $\mu$ be the Mean Time Between Faults (MTBF) of the platform.
In the absence of fault prediction, the standard approach is to take periodic checkpoints, each of
length \Cr, every period of duration \period. In steady-state utilization of the platform,
the value \ttopt of \period that minimizes the expected waste of resource usage due to checkpointing
is approximated
as  $\ttopt = \sqrt{2 \mu\Cr }+\Cr$, or $\ttopt = \sqrt{2 (\mu +\R)\Cr }+\Cr$ (where \R is the duration of the recovery).
The former expression is the well-known Young's formula~\cite{young74},
while the latter is due to Daly~\cite{daly04}.

Now, when some fault prediction mechanism is available, can we compute a better checkpointing period
to decrease the expected waste? and to what extent? Critical parameters that characterize a fault prediction
system are its recall \recall, which is  the fraction of faults that are indeed predicted, and its precision \precision,
which is the fraction of predictions that are correct (i.e., correspond to actual faults). 
The major objective of this paper  is to refine the expression of the
expected waste as a function of these new parameters, and to 
design efficient checkpointing policies that take predictions into account.
The key contributions of this paper are:
\begin{compactitem}
\item A refined first-order
  analysis in the absence of fault prediction. It leads to similar performance to 
  Young~\cite{young74} and Daly~\cite{daly04} when faults follow an
  Exponential distribution, and to better performance when faults follow
  a Weibull distribution.
  
  \item The extension of this analysis to fault predictions, and the design of new checkpointing policies that takes
  optimal decisions on whether and when to take these predictions
  into account (or to ignore them).
  
  \item For policies where the decision to trust the predictor is taken with
  the same probability throughout the checkpointing period, we show that we should always trust the predictor, or never, depending upon platform and predictor parameters.
  
  \item For policies where the decision to trust the predictor is taken with
variable probability during the checkpointing period, we show that we should
change strategy only once in the period, moving from never trusting the predictor
when the prediction arrives in the beginning of the period, to always
trusting the predictor
when the prediction arrives later on in the period, and we determine the optimal break-even point.

\item For all policies, we compute the optimal value of the checkpointing period
thereby designing optimal algorithms to minimize the waste when coupling
checkpointing with predictions.

\item An extensive set of simulations that corroborates all
  mathematical derivations.  These simulations are based on
  synthetic fault traces (for Exponential fault distributions,
  and for more realistic Weibull fault distributions) and on log-based
  fault traces. In addition, they include exact prediction dates and
  uncertainty intervals for these dates.
\end{compactitem}

The rest of the paper is organized as follows. We first detail the framework in Section~\ref{sec.framework}. We revisit Young and
Daly's approach in Section~\ref{sec.youngdaly}. We provide optimal algorithms 
to account for  predictions in Section~\ref{sec.no.intervals}: we start with 
simpler policies where the decision to trust the predictor is taken with
  the same probability throughout the checkpointing period (Section~\ref{sec.algo.simple}) before dealing with the most general approach
where the decision to trust the predictor is taken with
variable probability during the checkpointing period (Section~\ref{sec.algo.beautiful}).
Section~\ref{sec.simulations} is 
devoted to simulations: we first describe the framework (Section~\ref{sec.simulations.framework}) and then discuss synthetic and log-based
failure traces in Sections~\ref{sec.simulations.synthetic} and~\ref{sec:logbased} 
respectively. We discuss related work in Section~\ref{sec.related}.
Finally, we provide concluding remarks in Section~\ref{sec.conclusion}.

\begin{table}[bh]
  \centering
  \begin{tabular}{rp{15cm}}
\precision & Predictor precision: proportion of true positives among the number of pre\-dic\-ted faults\\
\recall & Predictor recall: proportion of predicted faults among total number of faults \\
\trust & Probability to trust the predictor\\
MTBF & Mean Time Between Faults \\
$N$ & Number of processors in the platform \\
$\mu$ & Platform MTBF \\
$\mu_{\text{ind}}$ & Individual MTBF \\
\muP & Rate of predicted faults \\
\muNP & Rate of unpredicted faults \\
\munew & Rate of events (predictions or unpredicted faults) \\
\D & Downtime \\
\R & Recovery time\\
\Cr & Duration of a regular checkpoint \\
\Cp & Duration of a proactive checkpoint \\
\period & Duration of a period \\
  \end{tabular}
  \caption{Table of main notations.\label{tab.notations}}
\end{table}

\section{Framework}
\label{sec.framework}

\subsection{Checkpointing strategy}
 
We consider a \emph{platform} subject to faults. Our work is agnostic
of the granularity of the platform, which may consist either of a
single processor, or of several processors that work concurrently and
use coordinated checkpointing.   \emph{Checkpoints} are taken at regular intervals, or
periods, of length \period. We denote by \Cr the duration of a
checkpoint (all checkpoints have same duration).  
By construction, we must enforce that $\Cr \leq \period$.
When a fault strikes the platform, the application is lacking some
resource for a certain period of time of length $\D$, the
\emph{downtime}. The downtime accounts for software rejuvenation
(i.e., rebooting~\cite{875631,1663301}) or for the replacement of the
failed hardware component by a spare one. Then, the application
recovers from the last checkpoint. \R denotes the duration of this
\emph{recovery} time.

\subsection{Fault predictor}

A fault predictor is a mechanism that is able to predict that some faults will take place, 
either at a certain point in time, or within some time-interval window. In this paper,
we assume that the predictor is able to provide exact prediction dates, 
and to generate such predictions early enough so that a \emph{proactive} checkpoint can indeed be taken before 
the event.

The accuracy of the fault predictor is characterized by two quantities, the \emph{recall}
and the \emph{precision}. The recall \recall is the fraction of faults that are predicted while
the precision \precision is the fraction of fault predictions that are correct.
Traditionally, one defines three types of \emph{events}: (i) \textit{True positive} events are faults that the predictor has been able to predict (let $\textit{True}_P$ be their number); (ii)
\textit{False positive} events are fault predictions that did not materialize as actual faults (let $\textit{False}_P$ be their number);
and (iii)  \textit{False negative} events are faults that were not predicted (let $\textit{False}_N$ be their number).
With these definitions, we have
$\recall = \frac{\textit{True}_P}{\textit{True}_P+\textit{False}_N}$ 
and $\p = \frac{\textit{True}_P}{\textit{True}_P+ \textit{False}_P}$.

Proactive checkpoints may have a
different length $\Cp$ than regular checkpoints of length $\Cr$. In
fact there are many scenarios. On the one hand, we may well have $\Cp > \Cr$ in scenarios where regular checkpoints are taken at time-steps where the application memory footprint is minimal~\cite{Hong01}; on the contrary, proactive checkpoints are taken according to
predictions that can take place at arbitrary instants. On the other hand,
we may  have $\Cp < \Cr$ in other scenarios~\cite{5542627}, e.g., when the prediction is
localized to a particular resource subset, hence allowing for a
smaller volume of checkpointed data. 

To keep full generality, we deal with two checkpoint sizes in this paper: \Cr for periodic
checkpoints, and \Cp for proactive checkpoints (those taken upon predictions). 

In the literature, the \emph{lead time} is the interval between the date at which the prediction is made available,
and the actual prediction date. While the lead time is an important parameter,
the shape of its distribution law is irrelevant to the problem: either
a fault is predicted at least $\Cp$ seconds in advance, and then one can checkpoint just in time before the fault, or the prediction is useless!
In other words, predictions that come too late should be classified as unpredicted faults whenever they materialize as actual faults, leading to a smaller value of the predictor recall.

\subsection{Fault rates}

The key parameter is $\mu$, the MTBF of the platform.  If the platform is made
of $N$ components whose individual MTBF is $\mu_{\text{ind}}$, then $\mu =
\frac{\mu_{\text{ind}}}{N}$. This result is true regardless of the fault distribution law\footnote{For the sake of completeness,
we provide a proof of this widely-used result in~\ref{app.mu}. To the best of our knowledge, no proof 
has been published in the literature yet.}.

In addition to $\mu$, the platform MTBF, 
let $\muP$ be the mean time between predicted events (both true positive and false positive),  and 
let $\muNP$ be the mean time between unpredicted faults (false negative).
Finally, we define the mean time between events as $\munew$ (including all three event types).
The relationships between $\mu$, $\muP$, $\muNP$, and $\munew$ are the following:
\begin{itemize}
\item Rate of unpredicted faults: $\frac{1}{\muNP} = \frac{{1-\recall
    }}{\mu} $, since $1-\recall$ is the fraction of faults that are
  unpredicted;
\item Rate of predicted faults: $ \frac{\recall}{\mu} =
  \frac{\precision}{\muP}$, since $\recall$ is the fraction of faults
  that are predicted, and $\precision$ is the fraction of fault
  predictions that are correct;

\item Rate of events:
  $\frac{1}{\munew}=\frac{1}{\muP}+\frac{1}{\muNP}$, since events are
  either predictions (true or false), or unpredicted faults.
\end{itemize}

\subsection{Objective: waste minimization}

The natural objective is to minimize the expectation of the total execution time, 
 \emph{makespan}, of the application. Instead, in order to ease
mathematical derivations, we aim at minimizing the \emph{waste}.
The waste is the expected percentage of time lost, or “wasted”, during
the execution.  In other words, the \emph{waste} is the
fraction of time during which the platform is not doing useful work.  This
definition was introduced by
Wingstrom~\cite{wingstrom-phd}. Obviously,
the lower the waste, the lower the expected makespan, and
reciprocally. Hence the two objectives are strongly related and
minimizing one of them also minimizes the other. 

\section{Revisiting Daly's  first-order approximation}
\label{sec.youngdaly}

Young proposed in~\cite{young74} a ``first order approximation to the
optimum checkpoint interval''. Young's formula was later refined by
Daly~\cite{daly04} to take into account the recovery time. 
We revisit their analysis using the notion of
waste. 

Let  \Time[base] be the base time of the application without any
overhead (neither checkpoints nor faults).
First, assume a \emph{fault-free} execution of the application with periodic checkpointing.
In such an environment, during each period of
length $\period$ we take a checkpoint, which lasts for a time $\Cr$,
and only $\period-\Cr$ units of work are executed. Let \Time[FF] be
the execution time of the application in this setting. 
Following most works in the literature, we also take a checkpoint at
the end of the execution. The fault-free execution time \Time[FF] is equal to the time
needed to execute the whole application, \Time[base], plus the time taken by
the checkpoints:
\begin{equation}
\Time[FF] \ = \ \Time[base] + \Nckpt \Cr
\label{eq.timeff11}
\end{equation}
where \Nckpt is the number of checkpoints taken. We have 
$$\Nckpt = \left\lceil\frac{\Time[base]}{\period-\Cr}\right\rceil
\ \approx \frac{\Time[base]}{\period-\Cr}$$
When discarding the ceiling function, we assume that the execution time is
very large with respect to the period or, symmetrically, that there are many
periods during the execution. Plugging back the (approximated) value
$\Nckpt = \frac{\Time[base]}{\period-\Cr}$, we derive that
\begin{equation}
\Time[FF] \ = \frac{\Time[base]}{\period-\Cr} \period
\label{eq.timeff1}
\end{equation}

The waste due to checkpointing in a fault-free execution, \Waste[FF],
is defined as the fraction of the execution time that does not
contribute to the progress of the application:
\begin{equation}
\Waste[FF] = \frac{\Time[FF]-\Time[base]}{\Time[FF]} 
\qquad \Leftrightarrow \qquad
\big( 1 -\Waste[FF] \big) \Time[FF] = \Time[base]
\label{eq.ff}
\end{equation}
Combining Equations~\eqref{eq.timeff1} and~\eqref{eq.ff}, we get:
\begin{equation}
\Waste[FF] = \frac{\Cr}{\period}
\label{eq.wasteff}
\end{equation}

Now, let \Time[final] denote the expected execution time of
the application in the presence of faults. This execution
time can be divided into two parts: (i) the execution of ``chunks'' of
work of size $\period-\Cr$ followed by their checkpoint; and (ii) the time
lost due to the faults. This decomposition is illustrated by
Figure~\ref{fig.daly}. The first part of the execution time is equal
to \Time[FF]. Let \Nfaults be the number of faults occurring during
the execution, and let \Tlost be the average time lost per
fault. Then,
\begin{equation}
  \Time[final] = \Time[FF] + \Nfaults \times \Tlost
\label{eq:tfinalraw}
\end{equation}
On average, during a time $\Time[final]$, $\Nfaults=\frac{\Time[final]}{\mu}$
faults happen. We need to estimate \Tlost. The instants at
which periods begin and at which faults strike are
independent. Therefore, the expected time elapsed between the
completion of the last checkpoint and a fault is $\frac{\period}{2}$
for all distribution laws, regardless of their particular shape. We
conclude that  $\Tlost = \frac{\period}{2}+D+R$,
because after each fault there is a downtime and a recovery. 
This leads to:
$$
  \Time[final] = \Time[FF] + \frac{\Time[final]}{\mu} \times \left(\D+\R+ \frac{\period}{2} \right)
$$
Let \Waste[fault] be the fraction of the total execution time that is
lost because of faults: 
\begin{equation}
\Waste[fault] =
\frac{\Time[final]-\Time[FF]}{\Time[final]}
\qquad\Leftrightarrow\qquad
\left( 1 -\Waste[fault] \right) \Time[final] = \Time[FF]
\label{eq.fail}
\end{equation}
We derive:
\begin{equation}
  \Waste[fault] = \frac{1}{\mu} \left( \D + \R + \frac{\period}{2} \right).
  \label{eq.wastefailfail}
\end{equation}

\begin{figure}[h]
\centering
\input{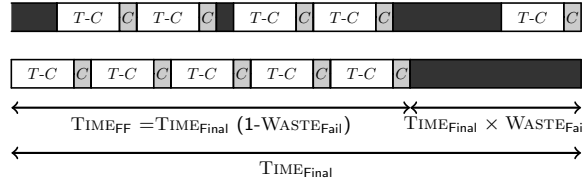}
\caption{An execution (top), and its re-ordering (bottom), to illustrate both sources
  of waste. Blackened intervals correspond to work destroyed by
  faults, downtimes, and recoveries.}
\label{fig.daly}
\end{figure}

In~\cite{daly04}, Daly uses the expression
\begin{equation}
	\Time[final] = \big( 1+ \Waste[fault] \big) \Time[FF]
\label{eq.dalydaly}
\end{equation}
instead of Equation~\eqref{eq.fail}, which leads him to his well-known first-order formula
\begin{equation}
\period = \sqrt{2(\mu + (\D+\R)) \Cr}+\Cr
\label{daly.wrong}
\end{equation}
Figure~\ref{fig.daly} explains why Equation~\eqref{eq.dalydaly} is not
correct and should be replaced by Equation~\eqref{eq.fail}. Indeed,
the expected number of faults depends on the final time, not on the
time for a fault-free execution. We point out that
Young~\cite{young74} also used Equation~\eqref{eq.dalydaly}, but with
$\D=\R=0$. Equation~\eqref{eq.fail} can be rewritten $\Time[final] =
\Time[FF]/\left( 1 -\Waste[fault] \right)$. Therefore, using
Equation~\eqref{eq.dalydaly} instead of Equation~\eqref{eq.fail}, in
fact, is equivalent to write
$
\frac{1}{1-\Waste[fault]} \approx 1+\Waste[fault]
$
which is indeed a first-order approximation if $\Waste[fault] \ll 1$.

Now, let \Waste denote the total waste:
\begin{equation}
\Waste= \frac{\Time[final]-\Time[base]}{\Time[final]}
\label{eq.wastefinal}
\end{equation}
Therefore
$$
\Waste = 1 - \frac{\Time[base]}{\Time[final]}
= 1 - \frac{\Time[base]}{\Time[FF]}\frac{\Time[FF]}{\Time[final]}
= 1 -(1-\Waste[FF])(1-\Waste[fault]).
$$
Altogether, we derive the final result:
\begin{eqnarray}
	\Waste &  = & \Waste[FF] + \Waste[fault] - \Waste[FF] \Waste[fault]
\label{eq.waste.right.gen}\\
	& = & \frac{\Cr}{\period} + \left( 1 -  \frac{\Cr}{\period} \right) 
	\frac{1}{\mu} \left( \D + \R + \frac{\period}{2} \right)
\label{eq.waste.right}
\end{eqnarray}

We obtain $\Waste =  \frac{u}{\period} + v + w \period$ where $u=\Cr \big( 1 - \frac{\D+\R}{\mu} \big)$, 
$v =  \frac{\D+\R- \Cr/2}{\mu} $, and $w= \frac{1}{2\mu}$. Thus $\Waste$ is minimized for $\period = \sqrt{\frac{u}{w}}$. 
The Refined First-Order (\newdaly) formula for the optimal period is thus:
\begin{equation}
\Tyafo = \sqrt{2(\mu - (\D+\R)) \Cr}
\label{eq.daly.right}
\end{equation}

It is interesting to point out why Equation~\eqref{eq.daly.right} is a first-order approximation, even for large jobs. 
Indeed, there are several restrictions to enforce for the approach to be valid:
\begin{compactitem}
\item We have stated that
the expected number of faults during execution is $\Nfaults = \frac{\Time[final]}{\mu} $,
and that the expected time lost due to a fault
is $\Tlost = \frac{\period}{2}$. Both statements are true individually,
but the expectation of a product is the product of the expectations only if the random variables are independent,
which is not the case here because \Time[final] depends upon the failure inter-arrival times.
\item In Equation~\eqref{eq.wasteff}, we have to enforce $\Cr \leq \period$ to have $\Waste[FF] \leq 1$
\item In Equation~\eqref{eq.wastefailfail}, we have to enforce $\D+\R \leq \mu$ and to bound  $\period$ in order to have $\Waste[fault] \leq 1$.
Intuitively, we need $\mu$ to be large enough for Equation~\eqref{eq.wastefailfail} to make sense. However,
regardless of the value of the individual MTBF $\mu_{\text{ind}}$, there is always a threshold in the number of components $N$
above which the platform MTBF  $\mu = \frac{\mu_{\text{ind}}}{N}$ becomes too small for Equation~\eqref{eq.wastefailfail} to be valid.
\item Equation~\eqref{eq.wastefailfail} is accurate only when 
two or more faults do not take place within the same period. Although unlikely when $\mu$ is large in front of $\period$,
the possible occurrence of many faults during the same period cannot be eliminated.
\end{compactitem}

To ensure that the latter condition (at most a single fault per period) is met with a high probability, we cap the length of the period: 
we enforce the condition $\T \leq \alpha \mu$, where $\alpha$ is some
tuning parameter chosen as follows. 
The number of faults during a period of length $\T$ can be modeled as a Poisson process of parameter 
$\beta = \frac{\T}{\mu}$.
The probability of having $k \geq 0$ faults is $P(X=k) = \frac{\beta^{k}}{k!}  e^{-\beta}$, where $X$ 
is the number of faults. 
Hence the probability of having two or more faults is $\pi = P(X\geq2) = 1 -( P(X=0) + P(X=1)) = 1 - (1+\beta) e^{-\beta}$.
If we assume $\alpha=0.27$ then $\pi \leq 0.03$, 
hence a valid approximation when bounding the period range accordingly. Indeed,
with such a  conservative value for $\alpha$, 
we have overlapping faults for only $3\%$ of the checkpointing segments in average, 
so that the model is quite reliable.
For consistency, we also enforce the same type of bound on the checkpoint time, and on the downtime and recovery: 
$\Cr \leq \alpha \mu$ and $\D+\R \leq \alpha \mu$.
However, enforcing these constraints may lead to use a sub-optimal period: it may well be the case that
 the optimal period $\sqrt{2(\mu - (\D+\R)) \Cr}$ of Equation~\eqref{eq.daly.right} does not belong to the 
 admissible interval $[\Cr , \alpha \mu]$. In that case, the waste is minimized for one of the bounds of the admissible interval:
 this is because, as seen from  Equation~\eqref{eq.waste.right},  the waste is a convex function of the period.

We conclude this discussion on a positive note. While capping the period, and enforcing a lower bound on the MTBF,
is mandatory for mathematical rigor, simulations (see Section~\ref{sec.simulations} for both Exponential 
and Weibull distributions) show that actual job executions can always use the value from 
 Equation~\eqref{eq.daly.right}, accounting for multiple faults
 whenever they occur by re-executing the work until success. The first-order model turns out to be surprisingly robust!

To the best of our knowledge, despite all the limitations above, there
is no better approach to estimate the waste due to checkpointing when
dealing with arbitrary fault distributions. However, assuming that
faults obey an Exponential distribution, it is possible to use the
memory-less property of this distribution to provide more accurate
results.  A second-order approximation when faults obey an
Exponential distribution is given in Daly~\cite[Equation~(20)]{daly04}
as $\Time[final] = \mu e^{\R/\mu} (e^{\frac{T}{\mu}}-1) \frac{
  \Time[base]}{\period -\Cr}$. In fact, in that case, the exact value
of $\Time[final]$ is provided in~\cite{SC2011,c183} as $\Time[final] = (\mu +
\D) e^{\R/\mu} (e^{\frac{T}{\mu}}-1) \frac{ \Time[base]}{\period
  -\Cr}$, and the optimal period is then
$\frac{1+\mathbb{L}(-e^{-\frac{\Cr}{\mu}-1})}{\mu}$ where
$\mathbb{L}$, the Lambert function, is defined as
$\mathbb{L}(z)e^{\mathbb{L}(z)}=z$.

To assess the accuracy of the different first order approximations, we
compare the periods defined by Young's formula~\cite{young74}, Daly's
formula~\cite{daly04}, and Equation~\eqref{eq.daly.right}, to the
optimal period, in the case of an Exponential distribution. Results
are reported in Table~\ref{tab.periods}.  To establish these results,
we use the same parameters as in Section~\ref{sec.simulations}: \Cr=
\R= 600 s, \D= 60 s, and $\mu_{\text{ind}} = 125$ years. Furthermore, to
compute the optimal period, for each platform size we choose the
application size so that $\Time[base]=2 $
hours. 
One can observe in Table~\ref{tab.periods} that the relative error for
Daly's period is slightly larger than the one for Young's period. In
turn, the absolute value of the relative error for Young's period is
slightly larger than the one for \newdaly. More importantly, when
Young's and Daly's formulas overestimate the period, \newdaly
underestimates it. Table~\ref{tab.periods} does not allow us to assess
whether these differences are actually significant. However we also 
report in Section~\ref{sec.simulations.synthetic} some simulations that
show that Equation~\eqref{eq.daly.right} leads to smaller execution
times for Weibull distributions than both classical formulas
(Tables~\ref{makespan.07.tab} and~\ref{makespan.05.tab}).

\begin{table}[h]
  \centering
  \begin{tabular}{|r|r|rr|rr|rr|r|}
    \hline
    $N$ & \multicolumn{1}{c|}{$\mu$} & \multicolumn{2}{c|}{\young} &
    \multicolumn{2}{c|}{\daly} & \multicolumn{2}{c|}{\newdaly}  & \multicolumn{1}{c|}{Optimal}
    \\\hline
$2^{10}$ & 	3849609 &	68567 & (0.5 \%)	 &	 68573 & (0.5 \%)	 &	 67961 & (-0.4 \%)	 &	68240	\\
$2^{11}$ & 	1924805 &	48660 & (0.7 \%)	 &	 48668 & (0.7 \%)	 &	 48052 & (-0.6 \%)	 &	48320	\\
$2^{12}$ & 	962402 &	34584 & (1.2 \%)	 &	 34595 & (1.2 \%)	 &	 33972 & (-0.6 \%)	 &	34189	\\
$2^{13}$ & 	481201 &	24630 & (1.6 \%)	 &	 24646 & (1.7 \%)	 &	 24014 & (-0.9 \%)	 &	24231	\\
$2^{14}$ & 	240601 &	17592 & (2.3 \%)	 &	 17615 & (2.5 \%)	 &	 16968 & (-1.3 \%)	 &	17194	\\
$2^{15}$ & 	120300 &	12615 & (3.2 \%)	 &	 12648 & (3.5 \%)	 &	 11982 & (-1.9 \%)	 &	12218	\\
$2^{16}$ & 	60150 &	9096 & (4.5 \%)	 &	 9142 & (5.1 \%)	 &	 8449 & (-2.9 \%)	 &	8701	\\
$2^{17}$ & 	30075 &	6608 & (6.3 \%)	 &	 6673 & (7.4 \%)	 &	 5941 & (-4.4 \%)	 &	6214	\\
$2^{18}$ & 	15038 &	4848 & (8.8 \%)	 &	 4940 & (10.8 \%)	 &	 4154 & (-6.8 \%)	 &	4458	\\
$2^{19}$ & 	7519 &	3604 & (12.0 \%)	 &	 3733 & (16.0 \%)	 &	 2869 & (-10.8 \%)	 &	3218	\\
\hline
  \end{tabular}
  \caption{Comparing periods produced by the
    different approximations with optimal value. Beside each period, we report its
    relative deviation to the optimal. Each value is expressed in seconds.\label{tab.periods}}
\end{table}

\section{Taking predictions into accounts}
\label{sec.no.intervals}

In this section, we present an analytical model to assess the impact of predictions on periodic checkpointing strategies.
As already mentioned, we consider the case
where the predictor is able to provide exact prediction dates, 
and to generate such predictions at least $\Cp$ seconds in advance, so that a proactive checkpoint of length $\Cp$
can indeed be taken before  the event.

For the sake of clarity, we start with a simple algorithm (Section~\ref{sec.algo.simple}) which we refine
in Section~\ref{sec.algo.beautiful}. We then compute the value of the period that minimizes the waste in  
Section~\ref{sec.waste.minimization}.

\subsection{Simple policy}
\label{sec.algo.simple}

In this section, we consider the following algorithm:
\begin{itemize}
\item While no fault prediction is available, checkpoints are taken
  periodically with period $\period$;
  \item When a fault is predicted, there are two cases: either there is the possibility to take a proactive checkpoint,
  or there is not enough time
to do so, because we are already checkpointing (see Figures~\ref{fig.waste-exact}(b) and~\ref{fig.waste-exact}(c)).
In the latter case, there is no other choice than ignoring the prediction. 
In the former case, we still have the possibility to ignore the prediction, but we may also decide to trust it: in fact
 the decision is randomly taken. With
  probability \trust, we trust the predictor and take the prediction
  into account  (see Figures~\ref{fig.waste-exact}(f) and~\ref{fig.waste-exact}(g)),
  and with probability $1-\trust$,  we ignore the
  prediction  (see Figures~\ref{fig.waste-exact}(d) and~\ref{fig.waste-exact}(e));
    \item If we take the prediction into account, we take a proactive checkpoint (of length $\Cp$)  as late as possible, i.e., so that it completes right
 at the time when the fault is predicted to happen. After this checkpoint, we  complete the execution of the period  
 (see Figures~\ref{fig.waste-exact}(f) and~\ref{fig.waste-exact}(g));
  
   \item If we ignore the prediction, either by necessity (not enough time to take an extra checkpoint, see 
   Figures~\ref{fig.waste-exact}(b) and~\ref{fig.waste-exact}(c)), or
    or by choice (with 
 probability $1-\trust$, Figures~\ref{fig.waste-exact}(d) and~\ref{fig.waste-exact}(e)), 
 we finish the current period and start a new one.
\end{itemize}

The rationale for not always trusting the predictor is to avoid taking
useless checkpoints too frequently. Intuitively, the precision $\precision$
of the predictor must be above a given threshold for its usage to be worthwhile.
In other words, if we decide to checkpoint just before a predicted event,
either we will save time by avoiding a costly re-execution if the event does correspond
to an actual fault, or  we will lose time by unduly performing an extra checkpoint.
We need a larger proportion of the former
cases, i.e., a good precision, for the predictor to be really useful.
The following analysis will determine the optimal value of  $\trust$ as a function
of the parameters $\Cr$, $\Cp$, $\mu$, $\recall$, and $\precision$.

We could refine the approach by taking  into account  the amount of work already done in the current period
when deciding whether to trust the predictor or not. Intuitively, the more work already done, the more important
to save it, hence the more worthwhile to trust the predictor. We design such a refined strategy in Section~\ref{sec.algo.beautiful}.
Right now, we analyze a simpler algorithm where we decide to trust or not to trust the predictor, independently of the amount of 
work done so far within the period.
 

\begin{figure*}
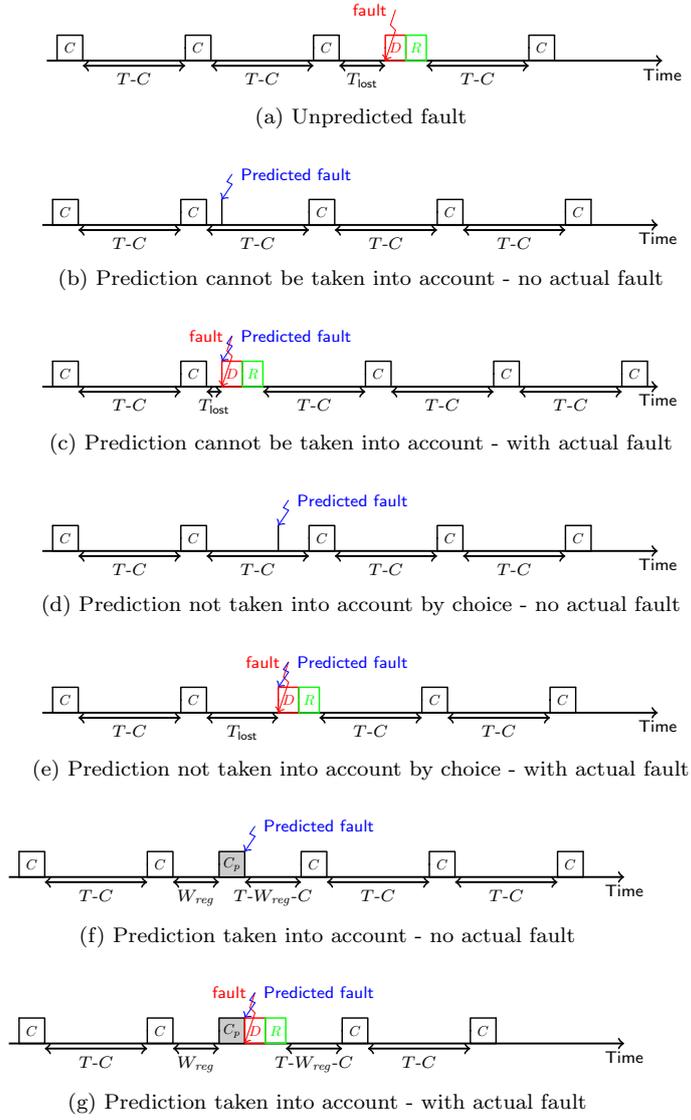

\centering
\subfloat[Unpredicted fault]
{
\scalebox{0.9}{\input{unpred_faults.tex}}
}
\\
\subfloat[Prediction cannot be taken into account - no actual fault]
{
\scalebox{0.9}{\input{pred_cannot_taken_nofault.tex}}
}
\\
\subfloat[Prediction cannot be taken into account - with actual fault]
{
\scalebox{0.9}{\input{pred_cannot_taken_wfault.tex}}
}
\\
\subfloat[Prediction not taken into account by choice - no actual fault]
{
\scalebox{0.9}{\input{pred_not_taken_nofault.tex}}
}
\\
\subfloat[Prediction not taken into account by choice - with actual fault]
{
\scalebox{0.9}{\input{pred_not_taken_wfault.tex}}
}
\\
\hspace{-1cm}
\subfloat[Prediction taken into account - no actual fault]
{
\scalebox{0.9}{\input{pred_taken_nofault.tex}}
}
\\
\hspace{-1cm}
\subfloat[Prediction taken into account - with actual fault]
{
\scalebox{0.9}{\input{pred_taken_wfault.tex}}
}
\caption{Actions taken for the different event types.}
\label{fig.waste-exact}
\end{figure*}

We analyze the algorithm in order to compute a formula for the expected waste, just as
in Equation~\eqref{eq.waste.right}. While the value of \Waste[FF] is unchanged ($\Waste[FF] = \frac{\Cr}{\T}$), 
the value of \Waste[fault] is modified because of predictions. As
illustrated in Figure~\ref{fig.waste-exact}, 
there are many different scenarios that contribute to \Waste[fault] that can be sorted into three categories:

\smallskip
\noindent
(1) \textbf{Unpredicted faults:}
 This overhead occurs each time an unpredicted fault strikes, that is, on average, once every $\muNP$ seconds.      
 Just as in Equation~\eqref{eq.wastefailfail}, the corresponding waste is
  $\frac{1}{\muNP} \left[ \frac{\T}{2} + \D+  \R  \right]$.\\
  
  \smallskip
\noindent
  (2) \textbf{Predictions not taken into account:}
The second source of waste is for predictions that are ignored. This overhead occurs in two different scenarios.
First, if we do not have time to take a proactive checkpoint, we have
an overhead if and only the prediction is an actual fault. This case
happens with probability $\precision$. We then lose a time $t+\D+\R$
if the predicted fault happens a time $t$ after the
completion of the last periodic checkpoint. The expected time lost is thus
$$\Tlost^{1} = \frac{ 1}{\T} \int_0^{\Cp} \left( \p(t+\D+\R) + (1-\p) 0 \right) dt$$
Then, if we do have  time to take a proactive checkpoint but still decide to ignore the prediction, 
we also have an overhead if and only the prediction is an actual fault, but the expected time lost is now weighted by
the probability $(1-\trust)$:
$$\Tlost^{2} = (1-\trust)  \frac{ 1}{\T} \int_{\Cp}^{\T} \left( \p(t+\D+\R) + (1-\p) 0 \right)dt $$

\smallskip
\noindent
(3) \textbf{Predictions taken into account:}
We now compute the overhead due to a prediction which we trust (hence we checkpoint just before its date). 
If the prediction is an actual fault, we lose $\Cp+\D+\R$  seconds, but if it is not, we lose the 
unnecessary extra checkpoint time $\Cp$. The expected time lost is now weighted by
the probability \trust and becomes
 $$\Tlost^{3} = \trust \frac{ 1}{\T} \int_{\Cp}^{\T} \left( \p(\Cp+\D+\R) + (1-\p) \Cp \right)dt $$

We derive the final value of \Waste[fault]:
$$\Waste[fault]  = \frac{1}{\muNP} \left[ \frac{\T}{2} + \D+  \R  \right] + \frac{1}{\muP} \left[ \Tlost^{1} + \Tlost^{2} + \Tlost^{3} \right]$$
This final expression comes from the disjunction of all possibles cases, using the Law of Total 
Probability~\cite[p.23]{Mitzenmacher2005}: 
the waste comes either from non-predicted faults or from predictions; in the latter case, we have analyzed the three
possible sub-cases and weighted them with their respective probabilities.
 After simplifications, we obtain
\begin{equation}
\Waste[fault] = \frac{1}{\mu}\left ((1- \recall \trust) \frac{\T}{2} + \D+ \R + \frac{\trust \recall}{\p} \Cp - \frac{\trust \recall \Cp^2}{\precision\T} (1 - \precision/2) \right )
\label{eq.total-waste}
\end{equation}
We could now plug this expression back into Equation~\eqref{eq.waste.right.gen} to compute the value of $\period$ 
that minimizes the total waste.
Instead,  we move on to describing the refined algorithm, and we minimize the waste for the refined strategy, since it always
induces a smaller waste.

\subsection{Refined policy}
\label{sec.algo.beautiful}

In this section, we refine the approach and  consider different trust 
strategies, depending upon the time in the period where the prediction takes place. Intuitively, the later in the period,
the more likely we are inclined to trust the predictor, because the amount of work that we could lose gets larger and larger.
As before, we cannot take into account a fault predicted to happen less
than $\Cp$ units of time after the beginning of the period. Therefore,
we focus on what happens in the period after time $\Cp$.
Formally, we now divide the interval $[\Cp, \period]$ into $n$ intervals $[\beta_i;\beta_{i+1}]$ for 
$i \in \{0,\cdots,n-1\}$, where $\beta_0=\Cp$ and $\beta_n=\T$. For each interval 
$[\beta_i;\beta_{i+1}]$, we trust the predictor with probability $\trust_i$. We aim at determining the values of 
$n$,  $\beta_i$, and $\trust_{i}$ that minimize the waste. 
As mentioned before, intuition tells us that the $\trust_{i}$ values should be non-decreasing. 
We prove below a somewhat unexpected theorem: in the optimal strategy, there is either one or two different $\trust_{i}$
values, and these values are $0$ or $1$. This means that we should \emph{never} trust the predictor in the beginning of a period,
and always trust it in the end of the period, without any intermediate behavior in between.

We formally express this striking result below. 
Let $\beta_{\text{lim}}=\frac{\Cp}{\precision}$.
The optimal strategy is provided by Theorem~\ref{th.thopt} below. We first prove the following proposition:

\begin{proposition}
\label{th.gopi}
The values of $\beta_i$ and $\trust_{i}$ that minimize the waste
satisfy the following conditions:\\
(i) For all $i$ such that $\beta_{i+1} \leq \beta_{\text{lim}}$, $\trust_i=0$.\\
(ii) For all $i$ such that $\beta_{i} \geq \beta_{\text{lim}}$, $\trust_{i}=1$. 
\end{proposition}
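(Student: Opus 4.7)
The plan is to generalize the waste decomposition of Section~\ref{sec.algo.simple} so that, within each interval $[\beta_i,\beta_{i+1}]$ of the period, predictions are trusted with probability $\trust_i$, and then to show that the resulting expression is \emph{linear} in each $\trust_i$. Once linearity is established, minimizing the waste over each $\trust_i \in [0,1]$ reduces to checking the sign of a single coefficient, which is exactly what the proposition claims.

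More precisely, I would rewrite $\Waste[fault]$ exactly as in Section~\ref{sec.algo.simple}, but splitting the integral over $[\Cp,\T]$ into the $n$ sub-integrals over $[\beta_i,\beta_{i+1}]$. The contribution of the $i$-th sub-interval to $\Waste[fault]$ is (up to the common factor $\frac{1}{\muP\,\T}$)
\begin{equation*}
\trust_i \int_{\beta_i}^{\beta_{i+1}}\!\! \bigl[\p(\Cp+\D+\R)+(1-\p)\Cp\bigr]\,dt
\; +\; (1-\trust_i)\int_{\beta_i}^{\beta_{i+1}}\!\! \p(t+\D+\R)\,dt,
\end{equation*}
which, as a function of $\trust_i$, is affine with slope proportional to
\begin{equation*}
\bigl[\Cp + \p(\D+\R)\bigr](\beta_{i+1}-\beta_i) - \p\!\left[\tfrac{\beta_{i+1}^{2}-\beta_i^{2}}{2} + (\D+\R)(\beta_{i+1}-\beta_i)\right] = (\beta_{i+1}-\beta_i)\Bigl(\Cp - \p\,\tfrac{\beta_i+\beta_{i+1}}{2}\Bigr).
\end{equation*}
Since the only other place where $\trust_i$ appears in Equation~\eqref{eq.total-waste} (generalized to several intervals) is through this sub-integral, the full waste is itself affine in $\trust_i$ with the same sign of slope.

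It then remains to read off the two cases. The slope is nonnegative iff $\frac{\beta_i+\beta_{i+1}}{2}\leq \frac{\Cp}{\p}=\beta_{\text{lim}}$, in which case the optimum is $\trust_i=0$; it is nonpositive iff the midpoint is $\geq \beta_{\text{lim}}$, in which case the optimum is $\trust_i=1$. The hypothesis $\beta_{i+1}\leq \beta_{\text{lim}}$ of~(i) (resp.\ $\beta_i\geq \beta_{\text{lim}}$ of~(ii)) clearly implies the midpoint inequality, so the two claims of the proposition follow. I expect no real obstacle here: the only subtlety is to check that, in the full expression of the waste, each $\trust_i$ indeed appears only through its own sub-interval term, so that the one-dimensional linear minimization argument is legitimate; this is immediate from the way the waste decomposes over the disjoint sub-intervals $[\beta_i,\beta_{i+1}]$ via the Law of Total Probability.
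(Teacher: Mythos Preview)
Your argument is correct and follows essentially the same route as the paper: write the waste with per-interval trust probabilities, observe it is affine in each $\trust_i$, and read off the optimal $\trust_i\in\{0,1\}$ from the sign of the coefficient. The only cosmetic difference is that the paper leaves the coefficient as $\int_{\beta_i}^{\beta_{i+1}}\bigl(\tfrac{\Cp}{\T}-\tfrac{\precision t}{\T}\bigr)\,dt$ and infers its sign directly from the sign of the integrand on $[\beta_i,\beta_{i+1}]$, whereas you evaluate the integral and phrase the sign condition via the midpoint $\tfrac{\beta_i+\beta_{i+1}}{2}$; both yield the stated conclusions under the hypotheses $\beta_{i+1}\le\beta_{\text{lim}}$ and $\beta_i\ge\beta_{\text{lim}}$.
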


\begin{proof}
First we compute the waste with the refined algorithm, using Equation~\eqref{eq.waste.right.gen}. 
The formula for \Waste[fault] is similar to Equation~\eqref{eq.total-waste} on each interval:

\begin{align*}
\Waste &= \frac{\Cr}{\T} + \left (1 - \frac{\Cr}{\T} \right) \left [ \frac{1}{\muNP}\left ( \frac{\T}{2} + \D + \R\right) \right .\\
& +\frac{1}{\muP}\sum_{i=0}^{n-1}  \left( \trust_i  \int_{\beta_i}^{\beta_{i+1}} \frac{ (\precision(\Cp + \D+\R) + (1-\precision)\Cp)}{\T} dt  \right . \\
&  +\left . \left . (1-\trust_i) \int_{\beta_i}^{\beta_{i+1}} \frac{\precision(t+\D+\R)}{\T}dt \right)  \right] 
\end{align*}

Now, consider a fixed value of $i$ and express the value of $\Waste$ as a function of $\trust_{i}$:
\begin{align*}
\Waste
&= K +\left ( 1 - \frac{\Cr}{\T} \right ) \frac{\trust_i}{\muP} \int_{\beta_i}^{\beta_{i+1}} \left ( \frac{ \Cp}{\T} -\frac{\precision t}{\T}\right )dt
\end{align*}
where $K$ does not depend on $\trust_i$. From the sign of the function to be integrated, 
one sees that  $\Waste$ is minimized 
when $\trust_i=0$ if $\beta_{i+1} \leq \beta_{\text{lim}}=\frac{\Cp}{p}$, and when $\trust_i=1$ if $\beta_{i}\geq \beta_{\text{lim}}$.
\end{proof}

\begin{theorem}
The optimal algorithm takes proactive actions if and only if the prediction falls in the interval 
$[\beta_{\text{lim}},\T]$.
\label{th.thopt}
\end{theorem}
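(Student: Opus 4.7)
The plan is to derive Theorem~\ref{th.thopt} directly from Proposition~\ref{th.gopi} by observing that the partition $\{\beta_i\}_{0\le i\le n}$ of $[\Cp,\T]$ is itself a design variable: both the number $n$ of sub-intervals and the location of the breakpoints can be chosen freely to minimize the waste. Proposition~\ref{th.gopi} handles every sub-interval that lies entirely on one side of $\beta_{\text{lim}}$; the only case that remains is a sub-interval that straddles $\beta_{\text{lim}}$.

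First I would argue that, without loss of generality, the value $\beta_{\text{lim}}$ may be taken to be one of the breakpoints. Given any candidate partition in which some sub-interval $[\beta_j,\beta_{j+1}]$ satisfies $\beta_j<\beta_{\text{lim}}<\beta_{j+1}$, I would refine the partition by inserting $\beta_{\text{lim}}$ as an additional breakpoint, splitting this sub-interval in two. The waste expression derived in the proof of Proposition~\ref{th.gopi} is a sum of independent contributions, one per sub-interval, each of which depends only on its own $\trust_i$; in particular, assigning the original probability $\trust_j$ to both new sub-intervals reproduces exactly the original waste. Hence the refined partition has a minimum waste no larger than the original, and I lose nothing by assuming $\beta_{\text{lim}}$ is already a breakpoint.

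Once $\beta_{\text{lim}}$ is a breakpoint, every sub-interval $[\beta_i,\beta_{i+1}]$ satisfies either $\beta_{i+1}\le\beta_{\text{lim}}$ or $\beta_i\ge\beta_{\text{lim}}$. Applying Proposition~\ref{th.gopi} to each case gives $\trust_i=0$ on every sub-interval contained in $[\Cp,\beta_{\text{lim}}]$ and $\trust_i=1$ on every sub-interval contained in $[\beta_{\text{lim}},\T]$. Collapsing adjacent sub-intervals that carry the same trust value (which again leaves the waste unchanged, by the same independence argument), the optimal strategy reduces to at most two sub-intervals with trust values $0$ and $1$ respectively, separated at $\beta_{\text{lim}}$. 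Together with the fact that predictions arriving in $[0,\Cp]$ cannot be acted upon at all (there is no time to insert a proactive checkpoint), this exactly states that proactive checkpoints are taken if and only if the prediction falls in $[\beta_{\text{lim}},\T]$.

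The main (and quite mild) obstacle is making the refinement step rigorous: I need to check that inserting a breakpoint never increases the achievable minimum waste, and that merging adjacent sub-intervals with equal $\trust_i$ leaves the waste invariant. Both follow because the summand in the expression of $\Waste$ that Proposition~\ref{th.gopi} uses is additive over the partition and depends on sub-interval $i$ only through $\trust_i$ and the integration bounds $\beta_i,\beta_{i+1}$; there is no cross-term. Once this is noted, the theorem is an immediate corollary of Proposition~\ref{th.gopi}.
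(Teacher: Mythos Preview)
Your argument is correct and is essentially the same as the paper's: both proofs handle the one sub-interval that may straddle $\beta_{\text{lim}}$ by splitting it at $\beta_{\text{lim}}$, observing that this refinement can only decrease the achievable waste (a relaxation), and then invoking Proposition~\ref{th.gopi} on each piece. Your write-up spells out the additivity and the final collapsing step a bit more explicitly, but the method is identical.
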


\begin{proof}
From Proposition~\ref{th.gopi},  the values for $\trust_i$ are optimally defined for every $i$ but one: we do 
not know the optimal value if there exists $i_{0}$ such that $\beta_{i_{0}} < \beta_{\text{lim}} < \beta_{i_{0}+1}$.
Then let us consider the waste where $\trust_{i_{0}}$
is replaced by $\trust_{i_{0}}^{(1)}$ on $[\beta_{i_{0}},\beta_{\text{lim}}]$ and by $\trust_{i_{0}}^{(2)}$ 
on $[\beta_{\text{lim}}, \beta_{i_{0}+1}]$.
The new waste is necessarily smaller than the one with only $\trust_{i_0}$, since we relaxed the 
constraint. We know from Proposition~\ref{th.gopi} that the optimal
solution is then to have 
$\trust_{i_0}^{(1)}=0$ and $\trust_{i_0}^{(2)}=1$.
\end{proof}

Let us now compute the value of the waste with the optimal algorithm. There are two cases, depending upon whether
$\period \leq \beta_{\text{lim}}$ or not. For values of $\T$ smaller than  $\beta_{\text{lim}}$, Theorem~\ref{th.thopt} 
shows that the optimal algorithm never takes any proactive action; in that case the waste is given by Equation~\eqref{eq.waste.right}
in Section~\ref{sec.youngdaly}.
For values of $\T$ larger than  $\beta_{\text{lim}} = \frac{\Cp}{\precision}$, we compute the waste due to predictions as
\begin{align*}
&\frac{1}{\muP} \frac{1}{\T} \left( \int_0^{\Cp/\precision} \precision(t+\D+\R) dt 
+  \int_{\Cp/\precision}^\T ( \precision(\Cp + \D+\R) + (1-\precision)\Cp) dt \right) \\
& = \frac{\recall}{\precision \mu} \left( \precision(\D+\R) +\Cp - \frac{\Cp^{2}}{2 \precision \T} \right)
\end{align*}
Indeed, in accordance with Theorem~\ref{th.thopt}, no prediction is taken into account in the interval $[0,\frac{\Cp}{\precision}]$, while all predictions 
are taken into account in the interval $[\frac{\Cp}{\precision},\period]$. 
Adding the waste due to unpredicted faults, namely $\frac{1}{\muNP} \left[ \frac{\T}{2} + \D+  \R  \right]$,
we derive
\[\Waste[fault] =  \frac{1}{\mu} \left ( (1-\recall) \frac{\T}{2}  + 
\frac{\recall}{\precision} \Cp \left ( 1 - \frac{1}{2\precision}\frac{\Cp}{\T}\right ) + \D+\R \right ).
\]
Plugging this value into Equation~\eqref{eq.waste.right.gen}, we obtain the total waste when $\frac{\Cp}{\precision}\leq \T$:
\begin{align*}
\waste 
& = \frac{\Cr}{\T} + \frac{1}{\mu} \left ( (1-\recall) \frac{\T}{2}  + 
\frac{\recall}{\precision} \Cp \left ( 1 - \frac{1}{2\precision}\frac{\Cp}{\T}\right ) + \D+\R \right )\left ( 1-\frac{\Cr}{\T}\right ) \nonumber \\
& = \frac{\recall\Cr\Cp^2}{2\precision^2}\frac{1}{\mu \T^2} + \left(\mu \Cr - \frac{\recall\Cp^2}{2\precision^2} - \Cr\left(\frac{\recall\Cp}{\precision}+\D+\R\right)\right)\frac{1}{\mu\T} +  \frac{1-\recall}{2\mu}\T \nonumber \\
& \quad \quad+ \frac{-(1-\recall)\frac{\Cr}{2} + \frac{\recall \Cp}{\precision} + \D + \R}{\mu}
\end{align*}

Altogether, the expression for the total waste becomes:
\begin{equation}
\begin{cases}
\waste_{1}(T) = \frac{\Cr \left ( 1 - \frac{\D+\R}{\mu} \right)}{\period} + \frac{\D+\R - \Cr/2}{\mu} + \frac{1}{2\mu} \period  & \text{if } \frac{\Cp}{\precision}\geq \T \\[4mm]
\waste_{2}(T) = \frac{\recall\Cr\Cp^2}{2\mu\precision^2}\frac{1}{ \T^2} + \frac{\left(\Cr \left ( 1 - \frac{\frac{\recall\Cp}{\precision}+\D+\R}{\mu} \right) - \frac{\recall\Cp^2}{2\mu\precision^2} \right)}{\T} + \frac{-(1-\recall)\frac{\Cr}{2} + \frac{\recall \Cp}{\precision} + \D + \R}{\mu} +  \frac{1-\recall}{2\mu}\T  & \text{if } \frac{\Cp}{\precision}\leq \T \\
\end{cases}
\label{eq.gopi-final}
\end{equation}
One can check that when $\recall=0$ (no error predicted, hence no proactive action in the algorithm), 
then $\waste_{1}$ and $\waste_{2}$  coincide. We also check that both values coincide for $\T = \frac{\Cp}{\precision}$.
We show how to minimize the waste in Equation~\eqref{eq.gopi-final} in Section~\ref{sec.waste.minimization}.

\subsection{Waste minimization}
\label{sec.waste.minimization}

In this section we focus on minimizing the waste in Equation~\eqref{eq.gopi-final}.
Recall that, by construction, we always have to enforce the constraint $\T \geq \Cr$.
First consider the case where $\Cr \leq \frac{\Cp}{\precision}$. 
On the interval $\T \in [\Cr,\frac{\Cp}{\precision}]$, we retrieve the optimal value found 
in Section~\ref{sec.youngdaly}, and derive that $\waste_{1}$, the
waste when predictions are not taken into account, is minimized for
\begin{equation}
	\label{eq.ty.beautiful}
\Ty=\max \left ( \Cr, \min \left (\Tyafo, \frac{\Cp}{\precision} \right ) \right )
\end{equation}
Indeed, the optimal value should belong to the interval $[\Cr,\frac{\Cp}{\precision}]$, and the 
function $\waste_{1}$ is convex: if the extremal solution $\sqrt{2(\mu-(\D+\R))\Cr}$ does not belong to this interval,
then the optimal value is one of the bounds of the interval.

On the interval  $\T \in \left[\frac{\Cp}{\precision},+\infty\right)$, we find the optimal solution by differentiating
twice $\waste_{2}$ with respect to \T.  Writing $\waste_{2}(\T) = \frac{u}{\T^{2}} + \frac{v}{\T} + w + x \T$ for simplicity,
we obtain $\waste_{2}''(\T)  = \frac{2}{\T^{3}} \left( \frac{3u}{\T} + v \right)$. 
Here, a key parameter is 
the sign of : $$v = \left(\Cr \left ( 1 - \frac{\frac{\recall\Cp}{\precision}+\D+\R}{\mu} \right) - \frac{\recall\Cp^2}{2\mu\precision^2} \right)$$
 We detail the case $v\geq 0$ in the following, because it is the most frequent with realistic
parameter sets; we do have $v\geq 0$ for all the whole range of simulations in Section~\ref{sec.simulations}.
For the sake of completeness,  we will briefly discuss the case $v<0$ in the comments below.

When $v \geq 0$, we have $\waste_{2}''(\T) \geq 0$, so that $\waste_{2}$ is convex on the interval
$\left[\frac{\Cp}{\precision},+\infty \right)$ and admits a unique minimum \Te. Note that \Te can be computed either numerically or using Cardano's method, since it is the unique real root of a polynomial of degree $3$.
%
The optimal solution on $\left[\frac{\Cp}{\precision},+\infty\right)$ is then:
$
\Tp=\max \left ( \Te, \frac{\Cp}{\precision} \right )
$.

It remains to consider the case  where  $\frac{\Cp}{\precision}<\Cr$. In fact, it suffices to add the constraint that the value of 
\Tp should be greater than \Cr, that is:
\begin{equation}
	\label{eq.tp.beautiful}
\Tp=\max \left ( \Cr, \max \left (\Te, \frac{\Cp}{\precision} \right ) \right )
\end{equation}
Finally, the optimal solution for the waste is given by the minimum of the following two values:
\[
\begin{cases}
\frac{\Cr \left ( 1 - \frac{\D+\R}{\mu} \right)}{\Ty} + \frac{\D+\R - \Cr/2}{\mu} + \frac{1}{2\mu} \Ty \\[4mm]
\frac{\recall\Cr\Cp^2}{2\mu\precision^2}\frac{1}{ \Tp^2} + \frac{\left(\Cr \left ( 1 - \frac{\frac{\recall\Cp}{\precision}+\D+\R}{\mu} \right) - \frac{\recall\Cp^2}{2\mu\precision^2} \right)}{\Tp} + \frac{-(1-\recall)\frac{\Cr}{2} + \frac{\recall \Cp}{\precision} + \D + \R}{\mu} +  \frac{1-\recall}{2\mu}\Tp
\end{cases}
\]

We make a few observations:
\begin{itemize}
\item Just as for Equation~\eqref{eq.daly.right} in Section~\ref{sec.youngdaly}, mathematical rigor calls for capping the
values of \D, \R, \Cr, \Cp and \T in front of the MTBF. The only difference is that we should replace $\mu$ by
$\munew$: this is to account for the occurrence rate of all events, be they unpredicted faults or predictions.
\item While the expression of the waste looks complicated, the numerical value of the optimal period
can easily be computed in all cases. We have dealt with the case $v \geq 0$, 
where $v$ is the coefficient of $1/T$ in $\waste_{2}(\T) = \frac{u}{\T^{2}} + \frac{v}{\T} + w + x \T$.
When $v < 0$ we only needs to compute all the nonnegative real roots
of a polynomial of degree $3$, and check which one leads to the best value. More precisely, these root(s) partition the
admissible interval $\left[\frac{\Cp}{\precision},+\infty \right)$ into several sub-intervals, and the optimal value is either a root or a sub-interval bound.
\item In many practical situations, when $\mu$ is large enough, we can dramatically simplify
the expression of $\waste_{2}(\T)$: we have $\T = O(\sqrt{\mu})$,
 the term $ \frac{u}{\T^{2}}$ becomes negligible, checkpoint parameters become negligible in front of $\mu$,
 and we derive the approximated value $\sqrt{\frac{2 \mu \Cr}{1-\recall}}$. This value can be seen as an extension of
Equation~\eqref{eq.daly.right} giving \Tyafo, where $\mu$ is replaced by $\frac{\mu}{1-\recall}$: faults are replaced by non-predicted faults, 
 and the overhead due to false predictions is negligible. As a word of caution, recall that this conclusion is valid only when $\mu$
 is very large in front of all other parameters.
\end{itemize}


\section{Simulation results}
\label{sec.simulations}

We start by presenting the simulation framework
(Section~\ref{sec.simulations.framework}). Then we report results
using synthetic traces (Section~\ref{sec.simulations.synthetic}) and
log-based traces (Section~\ref{sec:logbased}). Finally, we assess the
respective impact of the two key parameters of a predictor, its recall
and its precision, on checkpointing strategies
(Section~\ref{section.impact}).

\subsection{Simulation framework}
\label{sec.simulations.framework}

\noindent\textbf{Scenario generation --}
In order to check the accuracy of our model and of our analysis, and to
assess the potential benefits of predictors, we study the
performance of our new solutions and of pre-existing ones using a
discrete-event simulator.  The simulation engine generates a random
trace of faults.  Given a set of $p$ processors, a failure trace is a
set of failure dates for each processor over a fixed time horizon $h$
(set to 2 years).  Given the distribution of inter-arrival times at a
processor, for each processor we generate a trace via independent
sampling until the target time horizon is reached.  The job start time
is assumed to be one-year to avoid side-effects related to the
synchronous initialization of all nodes/processors. We consider two
types of failure traces, namely synthetic and log-based.

\noindent\textbf{Synthetic failure traces --} 
The simulation engine generates a random trace of faults
parameterized either by an Exponential fault distribution or by
Weibull distribution laws with shape parameter either $0.5$ or $0.7$.  Note
that Exponential faults are widely used for theoretical studies, while
Weibull faults are representative of the behavior of real-world
platforms~\cite{Weibull1,Weibull2,liu2008optimal,Heien:2011:MTH:2063384.2063444}. 
For example, Heien et al.~\cite{Heien:2011:MTH:2063384.2063444}
have studied the failure distribution for 6 sources of failures (storage devices, NFS, batch system, memory and processor cache errors, etc.), and the aggregate failure distribution. They have shown that the aggregate failure distribution is best modeled by a Weibull distribution with a shape parameter that is between 0.5841 and 0.7097.

The Jaguar platform, which comprised $N=45,208$ processors, is
reported to have experienced about one fault per day~\cite{6264677},
which leads to an individual (processor) MTBF $\mu_{\text{ind}}$ equal
to $\frac{45,208}{365}\approx 125$ years. Therefore, we set the
individual (processor) MTBF to $\mu_{\text{ind}} = 125$ years. We let
the total number of processors $N$ vary from $N=16,384$ to
$N=524,288$, so that the platform MTBF $\mu$ varies from $\mu=4,010$
min (about $2.8$ days) down to $\mu=125$ min (about $2$ hours).
Whatever the underlying failure distribution, it is scaled so that its
expectation corresponds to the platform MTBF $\mu$.  The application
size is set to $\Time[base]=10,000$ years/N.

\noindent
\textbf{Log-based failure traces --}  To corroborate the results obtained
with synthetic failure traces, and to further assess the performance
of our algorithms, we also perform simulations using the failure logs of two
production clusters.  We use logs of the largest clusters among the
preprocessed logs in the \emph{Failure trace
  archive}~\cite{10.1109/CCGRID.2010.71}, i.e., for clusters at the
Los Alamos National Laboratory~\cite{Weibull2}. In these logs, each
failure is tagged by the node ---and not the processor--- on which the
failure occurred. Among the 26 possible clusters, we opted for the
logs of the only two clusters with more than 1,000 nodes. The
motivation is that we need a sample history sufficiently large to
simulate platforms with more than ten thousand nodes.  The two chosen
logs are for clusters 18 (LANL18) and 19 (LANL19) in the archive
(referred to as 7 and 8 in~\cite{Weibull2}).  For each log, we record
the set $\mathcal{S}$ of availability intervals.  The discrete failure
distribution for the simulation is generated as follows: the
conditional probability $\mathbb{P}(X \ge t \ |\ X \ge \tau)$ that a
node stays up for a duration $t$, knowing that it has been up for a
duration $\tau$, is set to the ratio of the number of availability
durations in $\mathcal{S}$ greater than or equal to $t$, over the
number of availability durations in $\mathcal{S}$ greater than or
equal to $\tau$.

The two clusters used for computing our log-based failure
distributions consist of 4-processor nodes. Hence, to simulate a
platform of, say, $2^{16}$ processors, we generate $2^{14}$ failure
traces, one for each 4-processor node.  In the logs the individual
(processor) MTBF is $\mu_{\text{ind}}=691$ days for the LANL18
cluster, and $\mu_{\text{ind}}=679$ days for the LANL19 cluster.  The
LANL18 and LANL19 traces are logs for systems which comprised 4,096
processors. Using these logs to generate traces for a system made of
$524,288$ processors, as the largest platforms we consider with
synthetic failure traces, would lead to an obvious risk of
oversampling. Therefore, we limit the size of the log-based traces we
generate: we let the total number of processors $N$ varies from
$N=1,024$ to $N=131,072$, so that the platform MTBF $\mu$ varies from
$\mu=971$ min (about $16$ hours) down to $\mu=7.5$ min.  The application
size is set to $\Time[base]=250$ years/N.

\noindent 
\textbf{Predicted failures and false predictions --}  Once we have
generated a failure trace, we need to determine which faults are
predicted and which are not. In order to do so, we consider all faults in a trace
one by one.  For each of them, we randomly decide, with probability
\recall, whether it is predicted.

We use the simulation engine to generate a random trace of false
predictions. The main problem is to decide the shape of the
distribution that false predictions should follow. To the best of our
knowledge, no published study ever addressed that problem. For
synthetic failure traces, we report results when false predictions
follow the same distribution than faults (except, of course, that both
distributions do not have the same mean
value). In~\ref{app.sup.plots}, we report on simulations when false
predictions are generated according to a uniform distribution; the
results are quite similar. For log-based failures, we only report
results when false predictions are generated according to a uniform
distribution (because we believe that scaling down a discrete, actual
distribution may not be meaningful).

The distribution of false predictions is always scaled so that its
expectation is equal to $\frac{\muP}{1-\precision} = \frac{\precision
  \mu}{\recall (1-\precision)}$, the inter-arrival time of false
predictions.  Finally, the failure trace and the false-prediction
trace are merged to produce the final trace including all events (true
predictions, false predictions, and non predicted faults). Each
reported value is the average over $100$ randomly generated instances.

\noindent
\textbf{Checkpointing, recovery, and downtime costs --}  The experiments
use parameters that are representative of current and forthcoming
large-scale platforms~\cite{j116,Ferreira2011}.  We take $\Cr=\R=10$
min, and $\D=1$ min for the synthetic failure traces. For the
log-based traces we consider smaller platforms. Therefore, we take
$\Cr=\R=1$ min, and $\D=6$s.  Whatever the trace, we consider three
scenarios for the proactive checkpoints: either proactive checkpoints
are (i) exactly as expensive as periodic ones ($\Cp=\Cr$), (ii) ten
times cheaper ($\Cp=0.1\Cr$), and (iii) two times more expensive
($\Cp=2\Cr$).

\noindent
\textbf{Heuristics --}
In the simulations, we compare four checkpointing strategies:
\begin{compactitem}
\item \newdaly is the checkpointing strategy of period 
$\period = \sqrt{2(\mu - (\D+\R)) \Cr}$  (see Section~\ref{sec.youngdaly}).
\item \OptimalPrediction is the refined algorithm described in Section~\ref{sec.algo.beautiful}.

\item To assess the quality of each strategy, we compare it with its
  \bestper counterpart, defined as the same strategy but using the
  best possible period $\T$. This latter period is computed via a
  brute-force numerical search for the optimal period (each tested
  period is evaluated on $100$ randomly generated traces, and the period
  achieving the best average performance is elected as the ``best
  period''). 
\end{compactitem}

\noindent\textbf{Fault predictors -- }
We experiment using the characteristics of two predictors from the
literature: one accurate predictor with high recall and
precision~\cite{5958823}, namely with $\precision=0.82$ and
$\recall=0.85$, and another predictor with intermediate recall and
precision~\cite{5542627}, namely with $\precision=0.4$ and
$\recall=0.7$.

In practice, a predictor will not be able to predict the exact time
at which a predicted fault will strike the system. Therefore, in the
simulations, when a predictor predicts that a failure will strike the
system at a date $t$ (true prediction), the failure actually occurs
exactly at time $t$ for heuristic \OptimalPrediction, and between
time $t$ and time $t+2\Cr$ for heuristic \inexact (the probability of
fault is uniformly distributed in the time-interval). \OptimalPrediction
can thus be seen as a best case. The comparison between
\OptimalPrediction and \inexact enables us to assess the impact of the
time imprecision of predictions, and to show that the obtained results are
quite robust to this type of imprecision. The choice of an interval
length of $2\Cr$ is quite arbitrary. For synthetic traces, this
corresponds to 1,200 s, which is quite a significant imprecision.

\subsection{Simulations with synthetic traces}
\label{sec.simulations.synthetic}

Figures~\ref{fig.082.085} and~\ref{fig.04.07} show the average waste
degradation for the two checkpointing policies, and for their \bestper
counterparts, for both predictors. The waste is reported as a function
of the number of processors $N$. We draw the plots as a function of
the number of processors $N$ rather than of the platform MTBF $\mu =
\mu_{\text{ind}}/N$, because it is more natural to see the waste
increase with larger platforms. However, recall that this work is
agnostic of the granularity of the processing elements and
intrinsically focuses on the impact of the MTBF on the waste.

We also report job execution times, in Table~\ref{makespan.exp.tab}
when fault distribution follows an Exponential distribution law, and
in Tables~\ref{makespan.07.tab} and \ref{makespan.05.tab} for a
Weibull distribution law with shape parameter $k=0.7$ and $k=0.5$
respectively.

\noindent\textbf{Validation of the theoretical study --}
We used Maple to analytically compute and plot the optimal value of
the waste for both the algorithm taking predictions into account,
\OptimalPrediction, and for the algorithm ignoring them, \newdaly.  In
order to check the accuracy of our model, we have compared these
results with results obtained with the discrete-event simulator.

We first observe that there is a very good correspondence between
analytical results and simulations in Figures~\ref{fig.082.085}
and~\ref{fig.04.07}. In particular, the Maple plots and the
simulations for Exponentially distributed faults are very
similar. This shows the validity of the model and of its analysis.
Another striking result is that \OptimalPrediction has the same waste
as its \bestper counterpart, even for Weibull fault distributions, in
all but the most extreme cases. In the other cases, the waste achieved
by \OptimalPrediction is very close to that of its \bestper
counterpart. This demonstrates the very good quality of our
checkpointing period \Tp.  These conclusions are valid regardless of
the cost ratio of periodic and proactive checkpoints.  

In Tables~\ref{makespan.exp.tab} through \ref{makespan.05.tab} we
report the execution times obtained when using the expression of \T
given by Young~\cite{young74} and Daly~\cite{daly04} (denoted
respectively as \young and \daly) to assess whether \Tyafo is a better
approximation.  (Recall that these three approaches ignore the
predictions, which explains why the numbers are identical on both
sides of each table.)  The expressions of \T given by \young, \daly,
and \newdaly are identical for Exponential distributions and the three
heuristics achieve the same performance
(Table~\ref{makespan.exp.tab}).  This confirms the analytical
evaluation of Table~\ref{tab.periods} in Section~\ref{sec.youngdaly}.
For Weibull distributions (Tables~\ref{makespan.07.tab} and
\ref{makespan.05.tab}), \newdaly achieves lower makespan, and the
difference becomes even more significant as the size of the platform
increases.  Moreover, it is striking to observe in
Table~\ref{makespan.05.tab} that job execution time increases together
with the number for processors (from $N=2^{16}$ to $N=2^{19}$) if the
checkpointing period is \daly or \young.  On the contrary, job
execution time (rightfully) decreases when using \newdaly, even if the
decrease is moderate with respect to the increase of the platform
size.  Altogether, the main (striking) conclusion is that \newdaly
should be preferred to both classical approaches for Weibull
distributions.

\noindent\textbf{The benefits of prediction --}
The second observation is that the prediction is useful for the vast
majority of the set of parameters under study!  In addition, when
proactive checkpoints are cheaper than periodic ones, the benefits of
fault prediction are increased.  On the contrary, when proactive
checkpoints are more expensive than periodic ones, the benefits of
fault prediction are greatly reduced. One can even observe that the
waste with prediction is not better than without prediction in the
following scenario: $\Cp=2\Cr$, and using the limited-quality
predictor ($\precision=0.4$, $\recall=0.7$) with $2^{19}$ processors,
see Figures~\ref{fig.04.07}(i),(j),(k), and~(l).


%




\begin{figure*}
\centering
\includegraphics[scale=0.5]{newlegend.tex}\\
\hspace{-1cm}
\subfloat[Maple]
{
\includegraphics[scale=0.16]{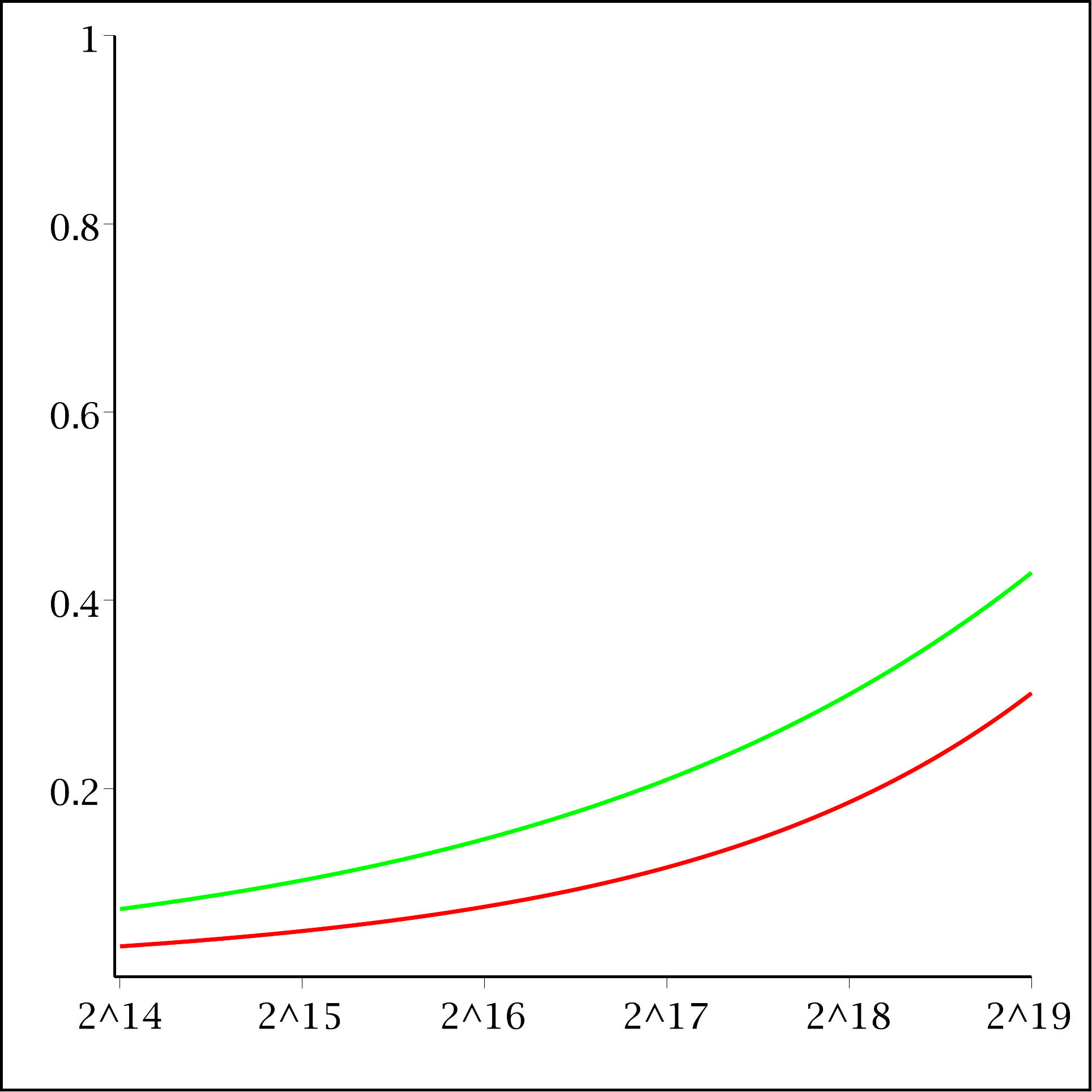}
}
\subfloat[Exponential]
{
\includegraphics[scale=0.36]{r085p083-EXP-fixedC-appli0-platform-variation.tex}
}	
\subfloat[Weibull $k=0.7$]
{
\includegraphics[scale=0.36]{r085p083-WEIBULL-07-fixedC-appli0-platform-variation.tex}
}
\subfloat[Weibull $k=0.5$]
{
\includegraphics[scale=0.36]{r085p083-WEIBULL-05-fixedC-appli0-platform-variation.tex}
}
\\
\hspace{-1cm}
\subfloat[Maple]
{
\includegraphics[scale=0.16]{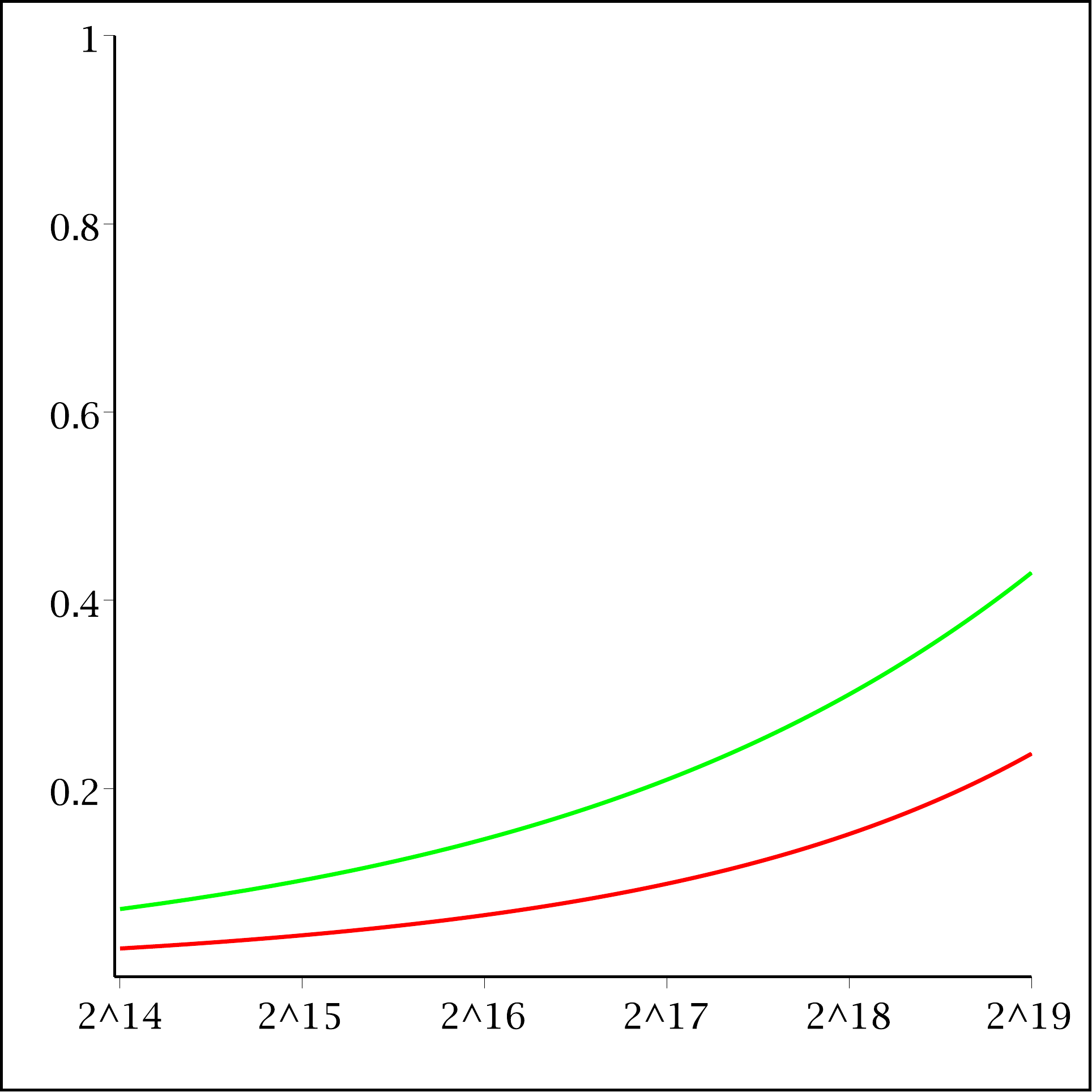}
}
\subfloat[Exponential]
{
\includegraphics[scale=0.36]{r085p083alpha1sur10-EXP-fixedC-appli0-platform-variation.tex}
}	
\subfloat[Weibull $k=0.7$]
{
\includegraphics[scale=0.36]{r085p083aplha1sur10-WEIBULL-07-fixedC-appli0-platform-variation.tex}
}
\subfloat[Weibull $k=0.5$]
{
\includegraphics[scale=0.36]{r085p083aplha1sur10-WEIBULL-05-fixedC-appli0-platform-variation.tex}
}
\\
\hspace{-1cm}
\subfloat[Maple]
{
\includegraphics[scale=0.16]{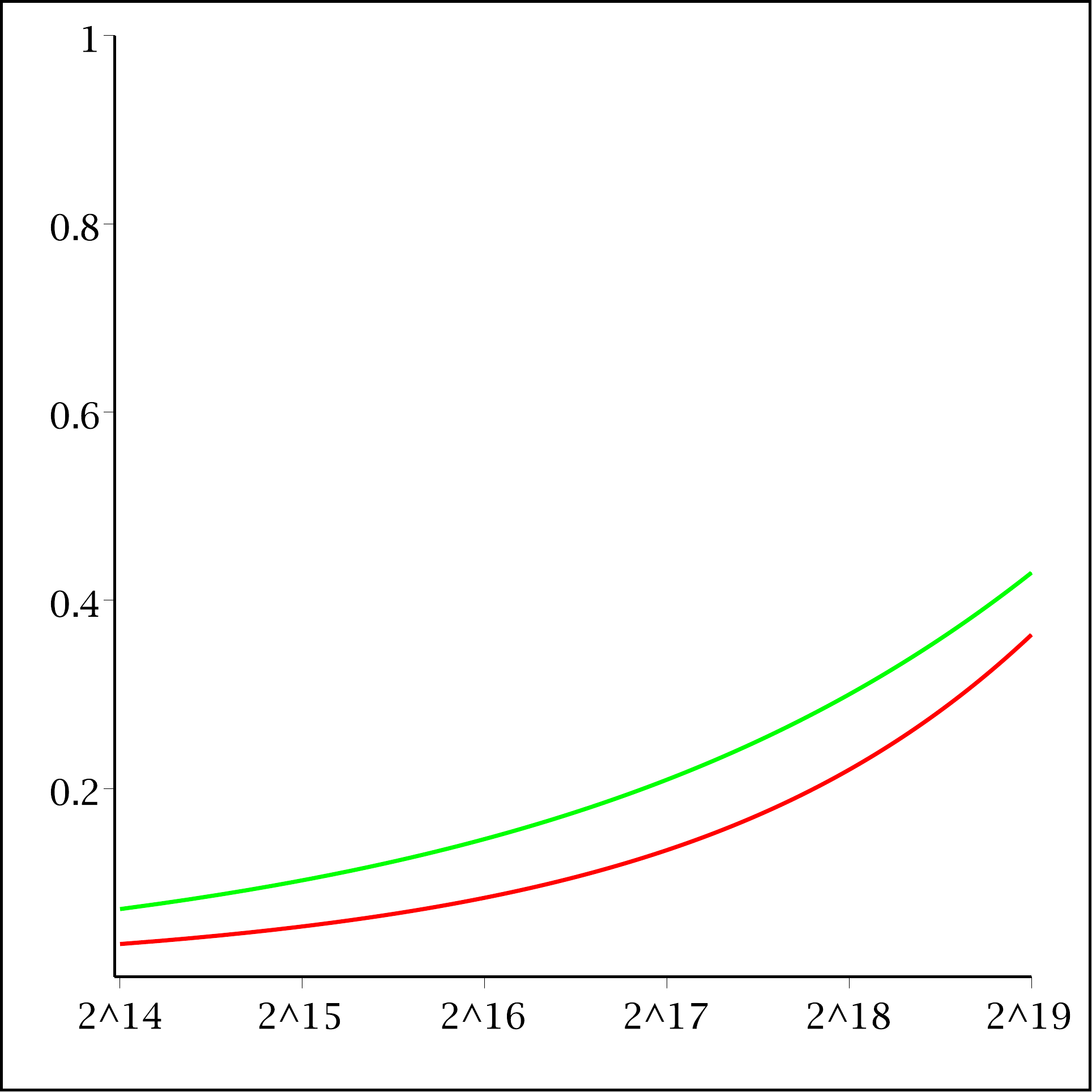}
}
\subfloat[Exponential]
{
\includegraphics[scale=0.36]{r085p083alpha2-EXP-fixedC-appli0-platform-variation.tex}
}	
\subfloat[Weibull $k=0.7$]
{
\includegraphics[scale=0.36]{r085p083alpha2-WEIBULL-07-fixedC-appli0-platform-variation.tex}
}
\subfloat[Weibull $k=0.5$]
{
\includegraphics[scale=0.36]{r085p083aplha2-WEIBULL-05-fixedC-appli0-platform-variation.tex}
}
\caption{Waste (y-axis) for the different heuristics as a function of the
  platform size (x-axis), with $\precision=0.82$, $\recall=0.85$, $\Cp=\Cr$ (first row), $\Cp=0.1 \Cr$  (second row), or $\Cp=2 \Cr$  (third  row) and with a trace of false predictions parametrized by a distribution identical to the distribution of the failure trace.}
	\label{fig.082.085}
\end{figure*}


\begin{figure*}
\centering
\hspace{-1cm}
\subfloat[Maple]
{
\includegraphics[scale=0.16]{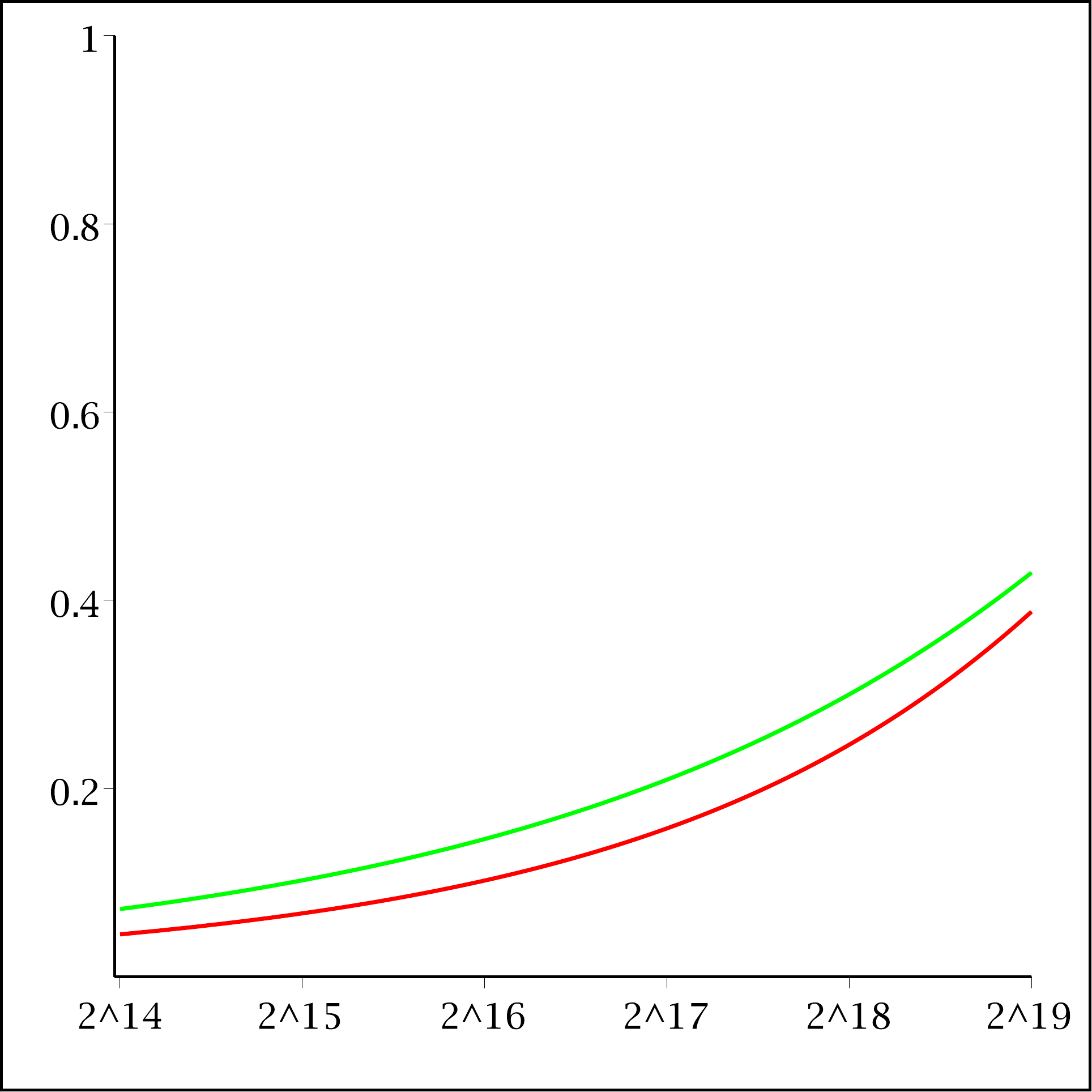}
}
\subfloat[Exponential]
{
\includegraphics[scale=0.36]{r07p04-EXP-fixedC-appli0-platform-variation.tex}
}
\subfloat[Weibull $k=0.7$]
{
\includegraphics[scale=0.36]{r07p04-WEIBULL-07-fixedC-appli0-platform-variation.tex}
}
\subfloat[Weibull $k=0.5$]
{
\includegraphics[scale=0.36]{r07p04-WEIBULL-05-fixedC-appli0-platform-variation.tex}
}
\\
\hspace{-1cm}
\subfloat[Maple]
{
\includegraphics[scale=0.16]{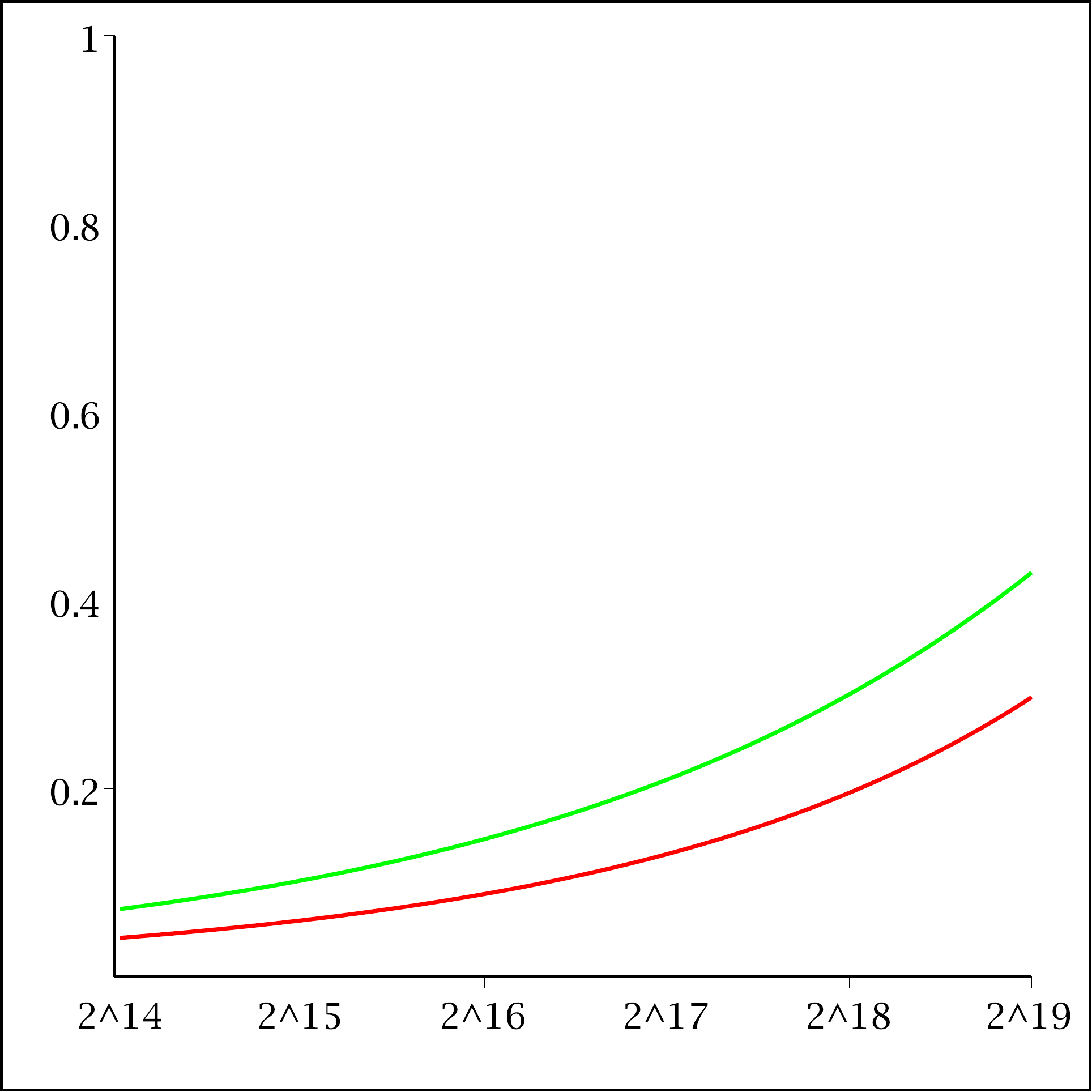}
}
\subfloat[Exponential]
{
\includegraphics[scale=0.36]{r07p04aplha1sur10-EXP-fixedC-appli0-platform-variation.tex}
}
\subfloat[Weibull $k=0.7$]
{
\includegraphics[scale=0.36]{r07p04alpha1sur10-WEIBULL-07-fixedC-appli0-platform-variation.tex}
}
\subfloat[Weibull $k=0.5$]
{
\includegraphics[scale=0.36]{r07p04alpha1sur10-WEIBULL-05-fixedC-appli0-platform-variation.tex}
}
\\
\hspace{-1cm}
\subfloat[Maple]
{
\includegraphics[scale=0.16]{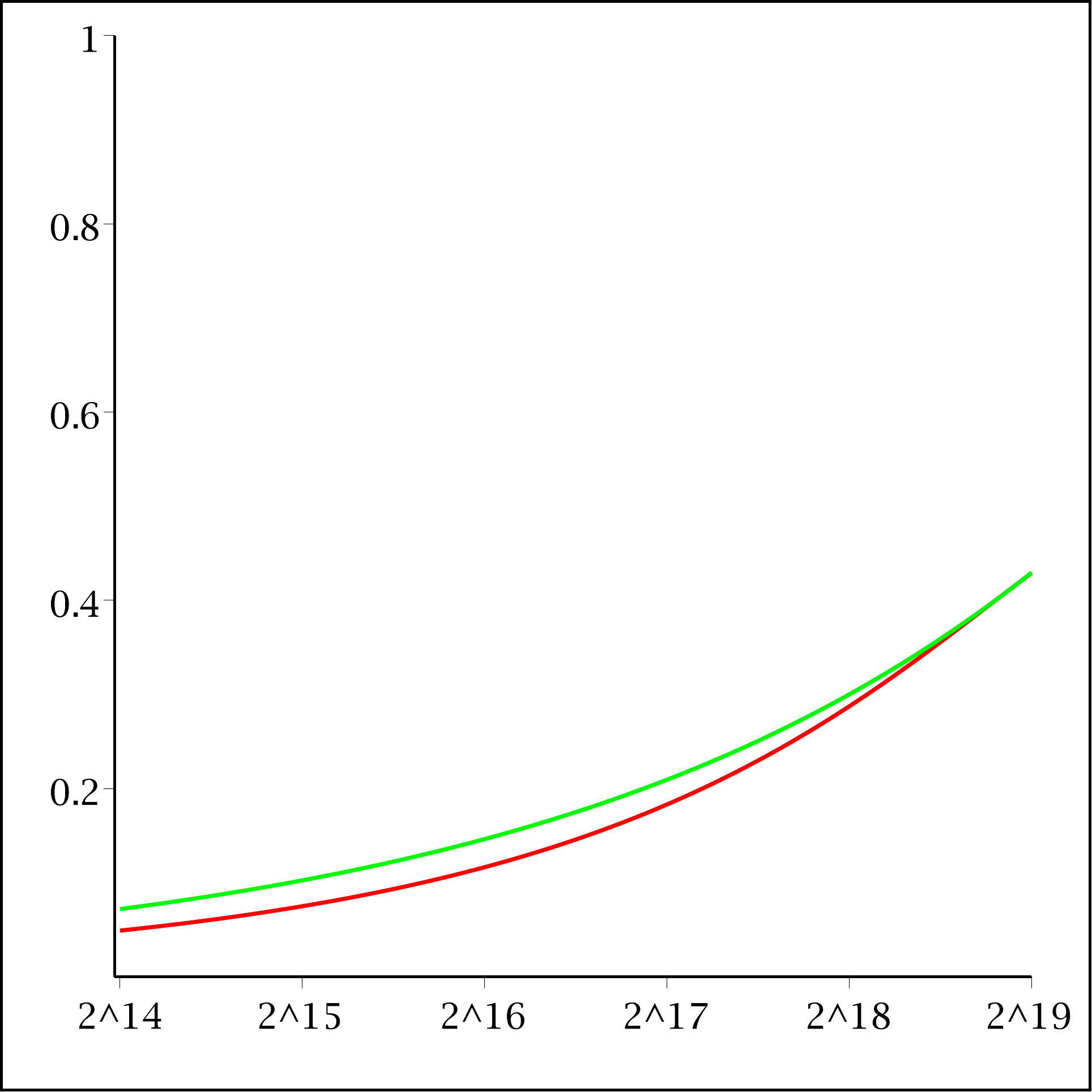}
}
\subfloat[Exponential]
{
\includegraphics[scale=0.36]{r07p04aplha2-EXP-fixedC-appli0-platform-variation.tex}
}
\subfloat[Weibull $k=0.7$]
{
\includegraphics[scale=0.36]{r07p04alpha2-WEIBULL-07-fixedC-appli0-platform-variation.tex}
}
\subfloat[Weibull $k=0.5$]
{
\includegraphics[scale=0.36]{r07p04alpha2-WEIBULL-05-fixedC-appli0-platform-variation.tex}
}
\caption{Waste (y-axis) for the different heuristics as a function of the
  platform size (x-axis), with $\precision=0.4$, $\recall=0.7$,  $\Cp=\Cr$  (first row), $\Cp=0.1 \Cr$ (second row), or $\Cp=2 \Cr$ (third  row) and with a trace of false predictions parametrized by a distribution identical to the distribution of the 
failure trace.}
	\label{fig.04.07}
\end{figure*}

In Tables~\ref{makespan.exp.tab} through \ref{makespan.05.tab} we
compute the gain (expressed in percentage) achieved by
\OptimalPrediction over \newdaly. As a general trend, we observe that
the gains due to predictions are more important when the distribution
law is further apart from an Exponential distribution. Indeed, the
largest gains are when the fault distribution follows a Weibull law of
parameter $0.5$.  Using \OptimalPrediction in conjunction with a
``good'' fault predictor we report gains up to 66\% when there is a large number
of processors ($2^{19}$). The gain is still of 37\% with $2^{16}$
processors. Using a predictor with limited recall and precision,
\OptimalPrediction can still decrease the execution time by $47\%$
with $2^{19}$ processors, and $31\%$ with $2^{16}$ processors. In all
tested cases, the decrease of the execution times is significant.
Gains are less important with Weibull laws of shape parameter $k=0.7$,
however they are still reaching a minimum of $13\%$ with $2^{16}$
processors, and up to $38\%$ with $2^{19}$ processors. Finally, gains
are further reduced with an Exponential law. They are still reaching
at least $5\%$ with $2^{16}$ processors, and up to $19\%$ with
$2^{19}$ processors.

The performance of \inexact shows that using a fault predictor remains
largely beneficial even in the presence of large uncertainties on the
time the predicted faults will actually occur (see
Tables~\ref{makespan.exp.tab},~\ref{makespan.07.tab},
and~\ref{makespan.05.tab}). When $N=2^{16}$ the degradation with
respect to \OptimalPrediction is of 3\% for a Weibull law with shape
parameter $k=0.7$, and the minimum gain over \newdaly is still of
10\%. When the shape parameter of the Weibull law is $k=0.5$, the
degradation is of 7\% when, for a minimum gain of 26\% over \newdaly.

\begin{table}[ht]
\centering
\begin{tabular}{c||c|c||c|c||}
 & \multicolumn{2}{|c||}{Execution time (in days)} & \multicolumn{2}{|c||}{Execution time (in days)}\\
\Cp = \Cr  & \multicolumn{2}{|c||}{($\precision = 0.82$,
  $\recall=0.85$)} & \multicolumn{2}{|c||}{($\precision = 0.4$, $\recall=0.7$)}\\
 & $2^{16}$ procs & $2^{19}$ procs & $2^{16}$ procs & $2^{19}$ procs \\
\hline
\young & 65.2 & 11.7 & 65.2 & 11.7 \\
\daly & 65.2 & 11.8 & 65.2 & 11.8 \\
\newdaly & 65.2 & 11.7 & 65.2  & 11.7\\
\hline
\OptimalPrediction & 60.0 (8\%) & 9.5 (19\%)& 61.7 (5\%) & 10.7 (8\%)\\
\inexact & 60.6 (7\%) & 10.2 (13\%)& 62.3 (4\%) & 11.4 (3\%) \\
\end{tabular}
\caption{Job execution times for an Exponential distribution, and gains due to the fault
  predictor (with respect to the performance of \newdaly).}
\label{makespan.exp.tab}
\end{table}

\begin{table}[ht]
\centering
\begin{tabular}{c||c|c||c|c||}
 & \multicolumn{2}{|c||}{Execution time (in days)} & \multicolumn{2}{|c||}{Execution time (in days)}\\
\Cp = \Cr  & \multicolumn{2}{|c||}{($\precision = 0.82$,
  $\recall=0.85$)} & \multicolumn{2}{|c||}{($\precision = 0.4$, $\recall=0.7$)}\\
 & $2^{16}$ procs & $2^{19}$ procs & $2^{16}$ procs & $2^{19}$ procs \\
\hline
\young & 81.3 & 30.1 & 81.3 &  30.1 \\
\daly & 81.4 & 31.0 & 81.4 & 31.0 \\
\newdaly & 80.3 & 25.5 & 80.3 & 25.5\\
\hline
\OptimalPrediction & 65.9 (18\%) & 15.9 (38\%)& 69.7 (13\%) & 20.2 (21\%)\\
\inexact & 68.0 (15\%) & 20.3 (20\%)& 72.0 (10\%) & 24.6 (4\%) \\
\end{tabular}
\caption{Job execution times for a Weibull distribution with
  shape parameter $k=0.7$, and gains due to the fault
  predictor (with respect to the performance of \newdaly).}
\label{makespan.07.tab}
\end{table}


\begin{table}[ht]
\centering
\begin{tabular}{c||c|c||c|c||}
 & \multicolumn{2}{|c||}{Execution time (in days)} & \multicolumn{2}{|c||}{Execution time (in days)}\\
\Cp = \Cr  & \multicolumn{2}{|c||}{($\precision = 0.82$,
  $\recall=0.85$)} & \multicolumn{2}{|c||}{($\precision = 0.4$, $\recall=0.7$)}\\
 & $2^{16}$ procs & $2^{19}$ procs & $2^{16}$ procs & $2^{19}$ procs \\
\hline
\young & 125.5 & 171.8 & 125.5 & 171.8 \\
\daly & 125.8 & 184.7 & 125.8 & 184.7 \\
\newdaly & 120.2 & 114.8 & 120.2  & 114.8\\
\hline
\OptimalPrediction & 75.9 (37\%) & 39.5 (66\%)& 83.0 (31\%) & 60.8 (47\%)\\
\inexact & 82.0 (32\%) & 60.8 (47\%)& 89.4 (26\%) & 76.6 (33\%) \\
\end{tabular}
\caption{Job execution times for a Weibull distribution with
  shape parameter $k=0.5$, and gains due to the fault
  predictor (with respect to the performance of \newdaly).}

\label{makespan.05.tab}
\end{table}

\subsection{Simulations with log-based traces}
\label{sec:logbased}

Figure~\ref{fig.traces18and19} shows the average waste degradation for
the two checkpointing policies, and for their \bestper counterparts,
for both predictors, both traces, and the three scenarios for
proactive checkpoints. Tables~\ref{makespan.trace18.tab}
and~\ref{makespan.trace19.tab} present job execution times for
\newdaly, \OptimalPrediction, and \inexact, for both traces and for
platform sizes smaller than as the ones reported in
Tables~\ref{makespan.exp.tab} through \ref{makespan.05.tab} for
synthetic traces. The waste for \newdaly is closer to its \bestper
counterpart with log-based traces than with Weibull-based traces. As a
consequence, when prediction with \OptimalPrediction is beneficial, it
is beneficial with respect to both \newdaly, and to \newdaly's
\bestper.

Overall, we observe similar results and reach the same
conclusions with log-based traces as with synthetic ones. The
waste of \OptimalPrediction is very close to that of its \bestper
counterpart for platforms containing up to $2^{16}$ processors.  This
demonstrates the validity of our analysis for the actual traces
considered. The waste of \OptimalPrediction is often significantly
larger than that of its \bestper counterpart for platforms containing
$2^{17}$ processors. The problem with the largest considered platforms
may be due to oversampling. Indeed, the original logs recorded events
for platforms comprising only 4,096 processors and respectively
contained only 3,010 and 2,343 availability intervals.

As with synthetic failure traces, prediction turns out to be useful
for the vast majority of tested configurations. The only cases when
prediction is not useful is with the ``bad'' predictor ($\recall=0.7$
and $\precision=0.4$), when the cost of proactive checkpoint is larger
than the cost of periodic checkpoints ($\Cp=2\Cr$), and when
considering the largest of platforms ($N=2^{17}$).  This extreme case
is, however, the only one for which prediction is not beneficial.  It
is not surprising that predictions are not useful when there are a lot
of false predictions that require the use of expensive proactive
actions. Looking at Tables~\ref{makespan.trace18.tab}
and~\ref{makespan.trace19.tab}, one could remark that performance
gains due to the predictions are similar to the ones observed with
Exponential-based traces, and are significantly smaller than the ones
observed with Weibull-based traces. However, recall
that we remarked that gains increase with the size of the platform,  and
that we consider smaller platforms when using log-based traces.

Finally, the imprecision related to the time where predicted faults
strike, induces a performance degradation. However, this degradation is
rather limited for the most efficient of the two predictors
considered, or when the platform size is not too large.

\begin{figure*}
\centering
\includegraphics[scale=0.5]{newlegend.tex}\\
\subfloat[$\Cp=0.1 \Cr$]
{
\includegraphics[scale=0.36]{TRACE18-R85-alpha01-fixedC-appli0-platform-variation.tex}
}
\subfloat[$\Cp=1 \Cr$]
{
\includegraphics[scale=0.36]{TRACE18-R85-alpha1-fixedC-appli0-platform-variation.tex}
}
\subfloat[$\Cp=2 \Cr$]
{
\includegraphics[scale=0.36]{TRACE18-R85-alpha2-fixedC-appli0-platform-variation.tex}
}\\
\centerline{LANL18 cluster with $\precision=0.82$, $\recall=0.85$.}
\smallskip \smallskip
\subfloat[$\Cp=0.1 \Cr$]
{
\includegraphics[scale=0.36]{TRACE18-R07-alpha01-fixedC-appli0-platform-variation.tex}
}
\subfloat[$\Cp=1 \Cr$]
{
\includegraphics[scale=0.36]{TRACE18-R07-alpha1-fixedC-appli0-platform-variation.tex}
}
\subfloat[$\Cp=2 \Cr$]
{
\includegraphics[scale=0.36]{TRACE18-R07-alpha2-fixedC-appli0-platform-variation.tex}
}\\
\centerline{LANL18 cluster with $\precision=0.4$, $\recall=0.7$.}
\smallskip\smallskip
\subfloat[$\Cp=0.1 \Cr$]
{
\includegraphics[scale=0.36]{TRACE19-R85-alpha01-fixedC-appli0-platform-variation.tex}
}
\subfloat[$\Cp=1 \Cr$]
{
\includegraphics[scale=0.36]{TRACE19-R85-alpha1-fixedC-appli0-platform-variation.tex}
}
\subfloat[$\Cp=2 \Cr$]
{
\includegraphics[scale=0.36]{TRACE19-R85-alpha2-fixedC-appli0-platform-variation.tex}
}\\
\centerline{LANL19 cluster with $\precision=0.82$, $\recall=0.85$.}
\smallskip \smallskip
\subfloat[$\Cp=0.1 \Cr$]
{
\includegraphics[scale=0.36]{TRACE19-R07-alpha01-fixedC-appli0-platform-variation.tex}
}
\subfloat[$\Cp=1 \Cr$]
{
\includegraphics[scale=0.36]{TRACE19-R07-alpha1-fixedC-appli0-platform-variation.tex}
}
\subfloat[$\Cp=2 \Cr$]
{
\includegraphics[scale=0.36]{TRACE19-R07-alpha2-fixedC-appli0-platform-variation.tex}
}\\
\centerline{LANL19 cluster  with $\precision=0.4$, $\recall=0.7$.}
\caption{Waste (y-axis) for the different heuristics as a function of the
  platform size (x-axis) with failures based on the failure log of
  LANL clusters 18 and 19.}
\label{fig.traces18and19}
\end{figure*}


\begin{table}[ht]
\centering
\begin{tabular}{c||c|c||c|c||}
 & \multicolumn{2}{|c||}{Execution time (in days)} & \multicolumn{2}{|c||}{Execution time (in days)}\\
\Cp = \Cr  & \multicolumn{2}{|c||}{($\precision = 0.82$,
  $\recall=0.85$)} & \multicolumn{2}{|c||}{($\precision = 0.4$, $\recall=0.7$)}\\
 & $2^{14}$ procs & $2^{17}$ procs & $2^{14}$ procs & $2^{17}$ procs \\
\hline
\newdaly & 26.8 & 4.88 & 26.8  & 4.88\\
\hline
\OptimalPrediction & 24.4 (9\%) & 3.89 (20\%)&   25.2 (6\%) &  4.44 (9\%)\\
\inexact & 24.7 (8\%) & 4.20 (14\%)&  25.5 (5\%) &  4.73 (3\%) \\
\end{tabular}
\caption{Job execution times with failures based on the failure log of LANL18 cluster, and gains due to the fault
  predictor (with respect to the performance of \newdaly).}
\label{makespan.trace18.tab}
\end{table}

\begin{table}[ht]
\centering
\begin{tabular}{c||c|c||c|c||}
 & \multicolumn{2}{|c||}{Execution time (in days)} & \multicolumn{2}{|c||}{Execution time (in days)}\\
\Cp = \Cr  & \multicolumn{2}{|c||}{($\precision = 0.82$,
  $\recall=0.85$)} & \multicolumn{2}{|c||}{($\precision = 0.4$, $\recall=0.7$)}\\
 & $2^{14}$ procs & $2^{17}$ procs & $2^{14}$ procs & $2^{17}$ procs \\
\hline
\newdaly & 26.8 & 4.86 & 26.8   & 4.86\\
\hline
\OptimalPrediction & 24.4 (9\%) & 3.85 (21\%)& 25.2 (6\%) & 4.42 (9\%)\\
\inexact & 24.6 (8\%) & 4.14 (15\%)&  25.4 (5\%) & 4.71 (3\%) \\
\end{tabular}
\caption{Job execution times with failures based on the failure log of LANL19 cluster, and gains due to the fault
  predictor (with respect to the performance of \newdaly).}
\label{makespan.trace19.tab}
\end{table}

\subsection{Recall vs. precision}
\label{section.impact}

In this section, we assess the impact of the two key parameters of the predictor, its recall \recall and its precision $\precision$.
To this purpose, we conduct simulations with synthetic traces,
where one parameter is fixed while the other varies. We choose two platforms,
a smaller one with  $N=2^{16}$ processors (or a MTBF $\mu=1,000\ min$)
and a larger one with
$N=2^{19}$ processors (or a MTBF $\mu=125\ min$).  In both cases we
study the impact of the predictor characteristics assuming a Weibull 
fault distribution with shape parameter either $0.5$ or $0.7$, under the
scenario $\Cp=\Cr$.

In Figures~\ref{fig.recall.07} and~\ref{fig.recall.05}, we fix the value of \recall (either 
$\recall=0.4$ or $\recall=0.8$) and we let $\precision$ vary from $0.3$ to $0.99$.  
In the four plots, we observe that the precision has a minor impact on
the waste, whether it is with a Weibull distribution of shape parameter $0.7$ (Figure~\ref{fig.recall.07}), 
or a Weibull distribution of shape parameter $0.5$ (Figure~\ref{fig.recall.05}).
In Figures~\ref{fig.precision.07} and~\ref{fig.precision.05}, 
we conduct the converse experiment and  fix the value of \precision (either $ \precision=0.4$
or  $\precision=0.8$), letting $\recall$ vary from $0.3$ to $0.99$.  Here we observe that 
increasing the recall significantly improves performance, in all but
one configuration. In the configuration where improving the recall does
not make a (significant) difference, there is a very large number of faults and a low
precision, hence a large number of false predictions which negatively
impact the performance whatever the value of the recall.

Altogether we conclude that it is more important (for the design of future predictors) to focus on 
improving the recall \recall rather than the precision \precision, and our results can help quantify 
this statement. We provide an intuitive explanation as follows: unpredicted faults prove very 
harmful and heavily increase the waste, while unduly checkpointing due
to false predictions (usually) turns out 
to induce a smaller overhead. 
 

 \begin{figure*}
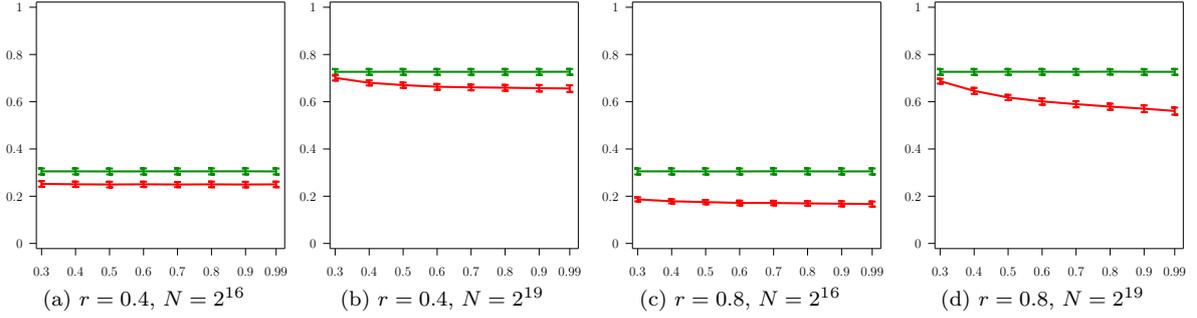

\centering
\resizebox{\textwidth}{!}{
\subfloat[$\recall=0.4$, $N=2^{16}$]
{
\includegraphics[scale=0.41]{r04platform16-WEIBULL-07-fixedC-appli0-platform-variation.tex}
}
\subfloat[$\recall=0.4$, $N=2^{19}$]
{
\includegraphics[scale=0.41]{r04platform19-WEIBULL-07-fixedC-appli0-platform-variation.tex}
}
\subfloat[$\recall=0.8$, $N=2^{16}$]
{
\includegraphics[scale=0.41]{r08platform16-WEIBULL-07-fixedC-appli0-platform-variation.tex}
}
\subfloat[$\recall=0.8$, $N=2^{19}$]
{
\includegraphics[scale=0.41]{r08platform19-WEIBULL-07-fixedC-appli0-platform-variation.tex}
}
}
\caption{Waste (y-axis) as a function of the precision (x-axis) for a fixed recall ($\recall=0.4$ and
  $\recall=0.8$) and for a Weibull distribution of faults (with shape
  parameter $k=0.7$).}
	\label{fig.recall.07}
\end{figure*}
 
 \begin{figure*}
 \centering
 \resizebox{\textwidth}{!}{
 \subfloat[$\recall=0.4$, $N=2^{16}$]
 {
 \includegraphics[scale=0.41]{r04platform16-WEIBULL-05-fixedC-appli0-platform-variation.tex}
 }
 \subfloat[$\recall=0.4$, $N=2^{19}$]
 {
 \includegraphics[scale=0.41]{r04platform19-WEIBULL-05-fixedC-appli0-platform-variation.tex}
 }
 \subfloat[$\recall=0.8$, $N=2^{16}$]
 {
 \includegraphics[scale=0.41]{r08platform16-WEIBULL-05-fixedC-appli0-platform-variation.tex}
 }
 \subfloat[$\recall=0.8$, $N=2^{19}$]
 {
 \includegraphics[scale=0.41]{r08platform19-WEIBULL-05-fixedC-appli0-platform-variation.tex}
 }
 }
 \caption{Waste (y-axis) as a function of the precision (x-axis) for a fixed recall ($\recall=0.4$
   and  $\recall=0.8$) and for a Weibull distribution of faults (with
   shape parameter $k=0.5$).}
 	\label{fig.recall.05}
 \end{figure*}

\begin{figure*}
\centering
\resizebox{\textwidth}{!}{
\subfloat[$\precision=0.4$, $N=2^{16}$]
{
\includegraphics[scale=0.41]{p04platform16-WEIBULL-07-fixedC-appli0-platform-variation.tex}
}
\subfloat[$\precision=0.4$, $N=2^{19}$]
{
\includegraphics[scale=0.41]{p04platform19-WEIBULL-07-fixedC-appli0-platform-variation.tex}
}
\subfloat[$\precision=0.8$, $N = 2^{16}$]
{
\includegraphics[scale=0.41]{p08platform16-WEIBULL-07-fixedC-appli0-platform-variation.tex}
}
\subfloat[$\precision=0.8$, $N=2^{19}$]
{
\includegraphics[scale=0.41]{p08platform19-WEIBULL-07-fixedC-appli0-platform-variation.tex}
}
}
\caption{Waste (y-axis) as a function of the recall (x-axis) for a fixed precision ($\precision=0.4$ and  $\precision=0.8$) and for a Weibull distribution (k=0.7).}
	\label{fig.precision.07}
\end{figure*}

 \begin{figure*}
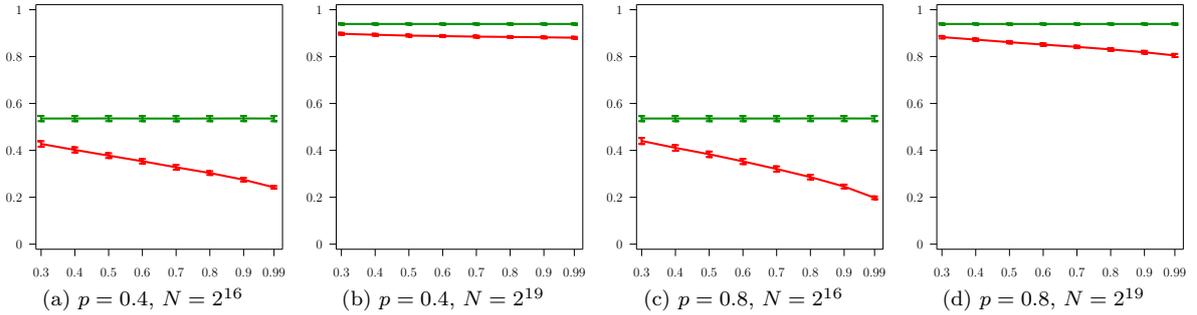

 \centering
 \resizebox{\textwidth}{!}{
 \subfloat[$\precision=0.4$, $N=2^{16}$]
 {
 \includegraphics[scale=0.41]{p04platform16-WEIBULL-05-fixedC-appli0-platform-variation.tex}
 }
 \subfloat[$\precision=0.4$, $N=2^{19}$]
 {
 \includegraphics[scale=0.41]{p04platform19-WEIBULL-05-fixedC-appli0-platform-variation.tex}
 }
 \subfloat[$\precision=0.8$, $N = 2^{16}$]
 {
 \includegraphics[scale=0.41]{p08platform16-WEIBULL-05-fixedC-appli0-platform-variation.tex}
 }
 \subfloat[$\precision=0.8$, $N=2^{19}$]
 {
 \includegraphics[scale=0.41]{p08platform19-WEIBULL-05-fixedC-appli0-platform-variation.tex}
 }
 }
 \caption{Waste (y-axis) as a function of the recall (x-axis) for a fixed precision ($\precision=0.4$ and  $\precision=0.8$) and for a Weibull distribution (k=0.5). }
 	\label{fig.precision.05}
 \end{figure*}

\section{Related work}
\label{sec.related}

Considerable research has been devoted to fault prediction, using very
different models (system log analysis~\cite{5958823}, event-driven
approach~\cite{GainaruIPDPS12,5958823,5542627}, support vector
machines~\cite{LiangZXS07,Fulp:2008:PCS:1855886.1855891}, nearest
neighbors~\cite{LiangZXS07}, etc).  In this section we give a brief
overview of existing predictors, focusing on their characteristics
rather than on the methods of prediction. For the sake of clarity, we
sum up the characteristics of the different fault predictors
from the literature in Table~\ref{rel.work.tab}.

The authors of~\cite{5542627} introduce the \emph{lead time}, that is
the duration between the time the prediction is made and the time the
predicted fault is supposed to happen. This time should be
sufficiently large to enable proactive actions. As already mentioned,
the distribution of lead times is irrelevant. Indeed, only predictions
whose lead time is greater than \Cp, the time to take a proactive
checkpoint, are meaningful. Predictions whose lead time is smaller
than \Cp, whenever they materialize as actual faults, should be
classified as unpredicted faults; the predictor recall should be
decreased accordingly.

The predictor of~\cite{5542627} is also able to locate where the
predicted fault is supposed to strike. This additional characteristics
has a negative impact on the precision (because a fault happening at
the predicted time but not on the predicted location is classified as
a non predicted fault; see the low value of \precision in
Table~\ref{rel.work.tab}). The authors of~\cite{5542627} state that
fault localization has a positive impact on proactive checkpointing
time in their context: instead of a full checkpoint costing $1,500$
seconds they can take a partial checkpoint costing only $12$
seconds. This led us to introduce a different cost \Cp for proactive
checkpoints, that can be smaller than the cost \Cr of regular
checkpoints. Gainaru et al.~\cite{GainaruSC12} also stated that
fault-localization could help decrease the checkpointing time. Their
predictor also gives information on fault localization.  They studied
the impact of different lead times on the recall of their predictor.
Papers~\cite{5958823} and~\cite{LiangZXS07} also considered lead
times.

\begin{table}
\centering
\begin{tabular}{|c|c|c|c|}
\hline
Paper & Lead Time & Precision & {Recall}  \\
\hline
\cite{5542627} & 300 s & 40 \% & 70 \%  \\
\cite{5542627} & 600 s & 35 \% & 60 \%  \\
\cite{5958823} & 2h & 64.8 \% & 65.2 \% \\
\cite{5958823} & 0 min & 82.3 \% & 85.4  \% \\
\cite{GainaruIPDPS12} & 32 s & 93 \% & 43  \% \\
\cite{GainaruSC12} & 10s & 92 \% & 40 \% \\
\cite{GainaruSC12} & 60s & 92 \% & 20 \% \\
\cite{GainaruSC12} & 600s & 92 \% & 3 \% \\
\cite{Fulp:2008:PCS:1855886.1855891} & NA & 70 \% & 75  \% \\
\cite{LiangZXS07} & NA & 20 \% & 30  \% \\
\cite{LiangZXS07} & NA & 30 \%& 75  \% \\
\cite{LiangZXS07} & NA & 40 \%& 90  \% \\
\cite{LiangZXS07} & NA & 50 \% & 30  \% \\
\cite{LiangZXS07} & NA & 60 \% & 85 \% \\
\hline
\end{tabular}
\caption{Comparative study of different parameters returned by some predictors.}
\label{rel.work.tab}
\end{table}

Most studies on fault prediction state that a proactive action must be
taken right before the predicted fault, be it a checkpoint
or a migration.  However, we have shown in this paper that it is
beneficial to ignore some predictions, namely when the predicted fault is
announced to strike less than $\frac{\Cp}{\precision}$ seconds after the
last periodic checkpoint. 

Gainaru et al.~\cite{GainaruSC12} studied the impact of
prediction on the checkpointing period. Their computation of the total waste
is not fully accurate and they do not provide any minimization analysis.  
Instead, they only propose to use Young's formula, replacing the MTBF by the 
mean-time of unpredicted faults. They do not question whether all predictions should be taken into account.
Furthermore, they did not conduct any simulations;
instead they analytically computed the ratio of the waste with and without predictions
and instantiated the corresponding formula with several scenarios.

Li et al.~\cite{li2009fault} considered the mathematical problem of
when and how to migrate. In order to be able to use migration, they
assumed that at any time 2\% of the resources are available as
spares. This allows them to conceive a Knapsack-based
heuristic. Thanks to their algorithm, they were able to save 30\% of
the execution time compared to a heuristic that does not take the
prediction into account, with a precision and recall of 70\%, and with
a maximum load of 0.7. In our study we do not consider that we have
a batch of spare resources. We assume that after a downtime the
resources that failed are once again available.

Note that some authors \cite{5958823,LiangZXS07} do not consider that
their predictors predict the exact time of the fault. On the contrary,
they consider a ``prediction window'' which is the time interval in
which the predicted is supposed to occur. Because most papers focus on
prediction windows of negligible length, we did not consider
prediction windows in this study.

Finally, to the best of our knowledge, this work is the first to focus on the mathematical aspect of 
fault prediction, and to provide a model and a detailed analysis of the waste due to all three types of
events (true and false predictions and unpredicted failures). 

\section{Conclusion}
\label{sec.conclusion}

In this work we have studied the impact of fault prediction on
periodic checkpointing. We started by revisiting the first-order
approach by Young and Daly. We have performed a refined analysis
leading to a better checkpointing period:  \Tyafo  is slightly closer to the optimal period
for Exponential distributions (the only case where the optimal is known),
and leads to smaller execution times for Weibull distributions (as shown in Section~\ref{sec.simulations.synthetic}).
 
Then we have extended the analysis to include
fault predictions. We have established analytical conditions stating whether a fault
prediction should be taken into account  or not. More importantly, we 
have proven that the optimal approach is to never trust the predictor in the beginning of a regular 
period, and to always trust it in the end of the period; the cross-over point $\frac{\Cp}{\precision}$
depends on the time to take a proactive checkpoint and on the precision of the predictor.
This striking result is somewhat unexpected, as one might have envisioned more trust regimes,
with several intermediate trust levels smoothly evolving from a ``never trust'' policy to an ``always trust'' one.

We have conducted simulations involving synthetic failure traces following either an Exponential distribution law or a Weibull one. We have also
used log-based failure traces. In addition, we have used exact prediction dates and uncertainty intervals for these dates. Through this extensive experiment setting,
we have established the accuracy of the model, of its analysis, and of the
predicted period (in the presence of a fault predictor). The
simulations also show that even a not-so-good  fault predictor
can lead to quite a  significant decrease in the  application execution
time.  We have also shown that the most important
characteristic of a fault predictor is its recall (the
percentage of actually predicted faults) rather than its precision
(the percentage of predictions that actually correspond to faults):
\emph{better safe than sorry}, or better prepare for a false event than miss an actual failure!

Altogether, the analytical model and the comprehensive results provided in this work enable to
fully assess the impact of fault prediction on optimal checkpointing strategies. 
Future work will be devoted to the study of the impact of fault prediction on
uncoordinated or hierarchical checkpointing protocols. Another challenging problem
is to determine the best trade-off between performance and energy consumption
when combining several resilience techniques 
such as checkpointing, prediction, and replication.

\section*{Acknowledgments.} 
Y.~Robert is with the Institut Universitaire de France.
This work was supported in part by the ANR {\em RESCUE} project.
We would like to thank the reviewers of JPDC for their
comments and suggestions, which greatly helped improve the final version of the paper.
\bigskip
\bibliographystyle{elsarticle-num}
\bibliography{biblio}

\appendix
\section{}
	\label{app.mu}

For the sake of completeness, we provide a proof of the following result:

\begin{proposition}
Consider a platform comprising $N$ components, and 
assume that the inter-arrival times of the faults on the components 
are independent and identically distributed random variables that follow an arbitrary probability
law whose expectation is $\mu_{\text{ind}}$. 
Then the expectation of the inter-arrival times of the faults on the whole platform is
$\mu =
\frac{\mu_{\text{ind}}}{N}$. 
\end{proposition}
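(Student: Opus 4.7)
The plan is to reduce the claim to a statement about the asymptotic rate of the superposed point process, since for a general (non-Exponential) inter-arrival distribution the superposition of the $N$ per-component fault processes is not itself a renewal process. Accordingly, I would first make precise that by ``the expectation of the inter-arrival times of the faults on the whole platform'' we mean $\lim_{t\to\infty} t/N(t)$, where $N(t)$ is the total number of faults on the platform up to time $t$; this coincides with the usual MTBF whenever the per-component distribution is Exponential, and is the only reasonable definition in general.

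Next I would set up the renewal framework component by component. For each component $i\in\{1,\dots,N\}$, let $(X_{i,k})_{k\geq 1}$ be its i.i.d.\ inter-arrival times with $\mathbb E[X_{i,k}]=\mu_{\text{ind}}$, and let $N_i(t)$ be the associated counting process. The elementary renewal theorem (or the strong law of large numbers applied to $S_{i,n}=X_{i,1}+\cdots+X_{i,n}$) yields
\begin{equation*}
\frac{N_i(t)}{t}\;\xrightarrow[t\to\infty]{\text{a.s.}}\;\frac{1}{\mu_{\text{ind}}}.
\end{equation*}

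Then I would sum over components: since $N(t)=\sum_{i=1}^{N}N_i(t)$ and the $N$ component processes are independent, linearity of the limit (applied to the $N$ almost sure convergences above) gives
\begin{equation*}
\frac{N(t)}{t}\;\xrightarrow[t\to\infty]{\text{a.s.}}\;\frac{N}{\mu_{\text{ind}}}.
\end{equation*}
Inverting, $t/N(t)\to \mu_{\text{ind}}/N$ almost surely, which by the definition adopted in the first step is exactly the statement $\mu=\mu_{\text{ind}}/N$.

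The main obstacle, and the step that must be handled with care, is the very first one: making sense of a ``mean inter-arrival time'' for a process whose inter-arrivals are neither independent nor identically distributed when $N\geq 2$ and the law of $X$ is not Exponential. Once that definitional point is addressed (via the long-run rate $N(t)/t$), everything else is an immediate consequence of the elementary renewal theorem plus additivity of counting processes, and no probabilistic assumption beyond finiteness of $\mu_{\text{ind}}$ and independence across components is needed.
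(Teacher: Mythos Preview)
Your proposal is correct and follows essentially the same route as the paper: establish the long-run fault rate $1/\mu_{\text{ind}}$ for each component via renewal theory, then use additivity of the counting processes $N(t)=\sum_i N_i(t)$ to obtain the platform rate $N/\mu_{\text{ind}}$ and invert. The only differences are cosmetic: the paper bounds $\mathbb{E}[n_q(F)]$ via Wald's equation and works with limits of expectations, whereas you invoke the elementary renewal theorem/SLLN and work almost surely; you are also slightly more careful than the paper in flagging that the superposed inter-arrival times $Y_i$ are not i.i.d.\ in general, a point the paper's proof glosses over when it writes ``let $Y_i$ denote the IID random variables for fault inter-arrival times on the platform.''
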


\begin{proof}
\vfill
\noindent
Consider first a single component, say component number $q$. Let
  $X_{i}$, $i \geq 0$ denote the IID random variables for fault inter-arrival times
on that component, with $\esperance{X_{i}} = \mu_{\text{ind}}$.
Consider a fixed time bound $F$. 
Let $n_{q}(F)$  be the number of faults on the component until time $F$ is
exceeded. In other words, the $(n_{q}(F)-1)$-th fault is the last one to happen
strictly before time $F$, and the $n_{q}(F)$-th fault is the first to happen
at time $F$ or after.
By definition of $n_{q}(F)$, we have 
$$\sum_{i=1}^{n_{q}(F)-1} X_{i} \leq F \leq \sum_{i=1}^{n_{q}(F)} X_{i}$$
Using Wald's equation~\cite[p. 486]{Ross10}, with $n_{q}(F)$ as a stopping criterion,
we derive:
$$(\esperance{n_{q}(F)}-1) \mu_{\text{ind}} \leq F \leq \esperance{n_{q}(F)} \mu_{\text{ind}}$$
and we obtain:
\begin{equation}
\lim_{F \rightarrow +\infty} \frac{\esperance{n_{q}(F)}}{F} = \frac{1}{\mu_{\text{ind}}}
\label{eq.wald}
\end{equation}

Consider now the whole platform, and  
let  $Y_{i}$, $i \geq 0$ denote the IID random variables for fault inter-arrival times
on the platform, with $\esperance{Y_{i}} = \mu$.
Consider 
a fixed time bound $F$ as before.
Let $n(F)$  be the number of faults on the whole platform until time $F$ is exceeded.
With the same reasoning for the whole platform as for a single
component, we derive:
\begin{equation}
\lim_{F \rightarrow +\infty} \frac{\esperance{n(F)}}{F} = \frac{1}{\mu}
\label{eq.wald1}
\end{equation}
Now let 
$m_{q}(F)$ be the number of these faults that strike component number $q$.
Of course we have $n(F) = \sum_{q=1}^{N} m_{q}(F)$.
By definition, except for the component hit by the last failure, 
$m_{q}(F)+1 $  is the number of failures on component $q$ until time $F$ is exceeded,
hence $n_{q}(F) = m_{q}(F)+1$
(and this number is $m_{q}(F) = n_{q}(F)$ on the component hit by the last failure).
From Equation~\eqref{eq.wald} again,  we have for each component $q$:
$$\lim_{F \rightarrow +\infty} \frac{\esperance{m_{q}(F)}}{F} = \frac{1}{\mu_{\text{ind}}}$$
Since $n(F) = \sum_{q=1}^{N} m_{q}(F)$, we also have:
\begin{equation}
\lim_{F \rightarrow +\infty} \frac{\esperance{n(F)}}{F} = \frac{N}{\mu_{\text{ind}}}
\label{eq.wald2}
\end{equation}
Equations~\eqref{eq.wald1} and ~\eqref{eq.wald2} lead to the result.
\end{proof}

\section{}
	\label{app.sup.plots}

        In this section, we provide results for synthetic failure
        traces when false predictions are generated according to a
        uniform distribution.

\begin{figure*}
\centering
\includegraphics[scale=0.5]{newlegend.tex}\\
\hspace{-1cm}
\subfloat[Maple]
{
\includegraphics[scale=0.16]{maple_journal_p082_r085_1.pdf}
}
\subfloat[Exponential]
{
\includegraphics[scale=0.36]{r085p083UNIF-EXP-fixedC-appli0-platform-variation.tex}
}	
\subfloat[Weibull $k=0.7$]
{
\includegraphics[scale=0.36]{r085p083UNIF-WEIBULL-07-fixedC-appli0-platform-variation.tex}
}
\subfloat[Weibull $k=0.5$]
{
\includegraphics[scale=0.36]{r085p083UNIF-WEIBULL-05-fixedC-appli0-platform-variation.tex}
}
\\
\hspace{-1cm}
\subfloat[Maple]
{
\includegraphics[scale=0.16]{maple_journal_p082_r085_01.pdf}
}
\subfloat[Exponential]
{
\includegraphics[scale=0.36]{r085p083alpha1sur10UNIF-EXP-fixedC-appli0-platform-variation.tex}
}	
\subfloat[Weibull $k=0.7$]
{
\includegraphics[scale=0.36]{r085p083aplha1sur10UNIF-WEIBULL-07-fixedC-appli0-platform-variation.tex}
}
\subfloat[Weibull $k=0.5$]
{
\includegraphics[scale=0.36]{r085p083aplha1sur10UNIF-WEIBULL-05-fixedC-appli0-platform-variation.tex}
}
\\
\hspace{-1cm}
\subfloat[Maple]
{
\includegraphics[scale=0.16]{maple_journal_p082_r085_2.pdf}
}
\subfloat[Exponential]
{
\includegraphics[scale=0.36]{r085p083alpha2UNIF-EXP-fixedC-appli0-platform-variation.tex}
}	
\subfloat[Weibull $k=0.7$]
{
\includegraphics[scale=0.36]{r085p083alpha2UNIF-WEIBULL-07-fixedC-appli0-platform-variation.tex}
}
\subfloat[Weibull $k=0.5$]
{
\includegraphics[scale=0.36]{r085p083aplha2UNIF-WEIBULL-05-fixedC-appli0-platform-variation.tex}
}
\caption{Waste (y-axis) for the different heuristics as a function of the
  platform size (x-axis), with $\precision=0.82$, $\recall=0.85$, $\Cp=\Cr$ (first row), $\Cp=0.1 \Cr$  (second row), or $\Cp=2 \Cr$  (third  row) and with a trace of false predictions parametrized by a uniform distribution.}
	\label{fig.082.085.appendix}
\end{figure*}

\begin{figure*}
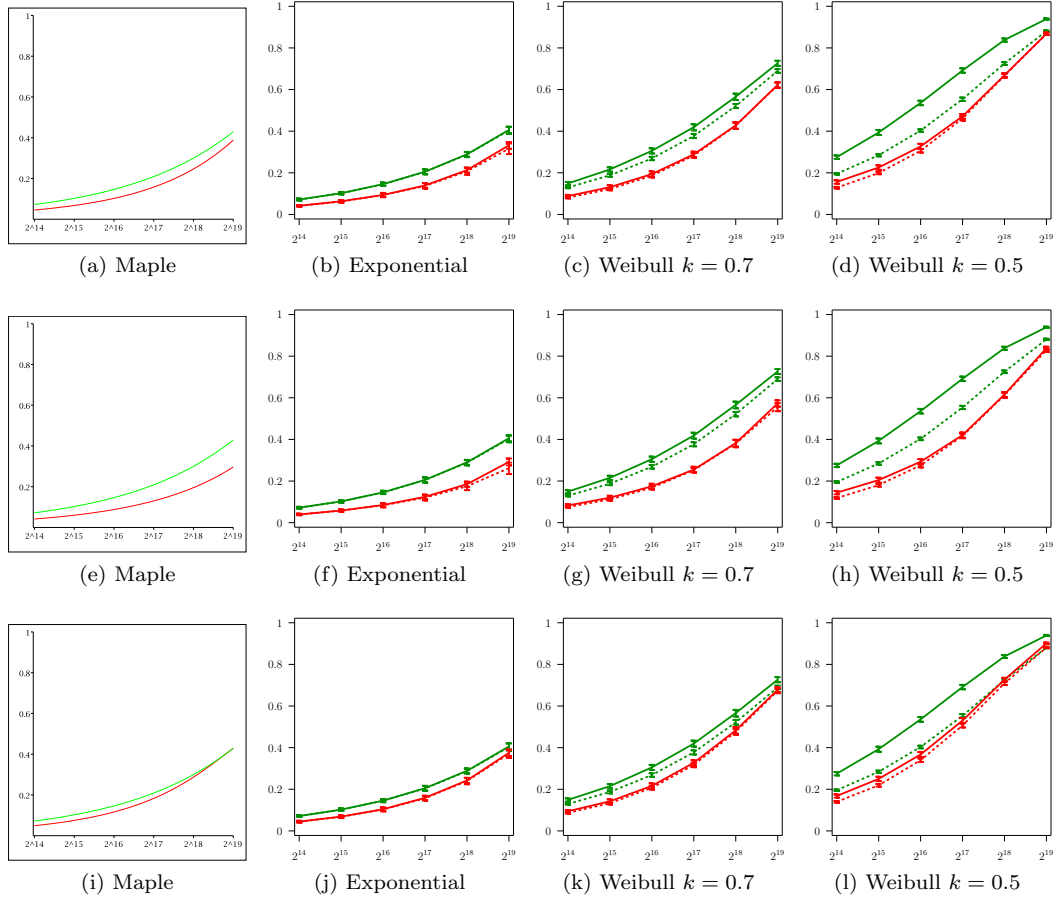

\centering
\hspace{-1cm}
\subfloat[Maple]
{
\includegraphics[scale=0.16]{maple_journal_p04_r07_1.pdf}
}
\subfloat[Exponential]
{
\includegraphics[scale=0.36]{r07p04UNIF-EXP-fixedC-appli0-platform-variation.tex}
}
\subfloat[Weibull $k=0.7$]
{
\includegraphics[scale=0.36]{r07p04UNIF-WEIBULL-07-fixedC-appli0-platform-variation.tex}
}
\subfloat[Weibull $k=0.5$]
{
\includegraphics[scale=0.36]{r07p04UNIF-WEIBULL-05-fixedC-appli0-platform-variation.tex}
}
\\
\hspace{-1cm}
\subfloat[Maple]
{
\includegraphics[scale=0.16]{maple_journal_p04_r07_01.pdf}
}
\subfloat[Exponential]
{
\includegraphics[scale=0.36]{r07p04aplha1sur10UNIF-EXP-fixedC-appli0-platform-variation.tex}
}
\subfloat[Weibull $k=0.7$]
{
\includegraphics[scale=0.36]{r07p04alpha1sur10UNIF-WEIBULL-07-fixedC-appli0-platform-variation.tex}
}
\subfloat[Weibull $k=0.5$]
{
\includegraphics[scale=0.36]{r07p04alpha1sur10UNIF-WEIBULL-05-fixedC-appli0-platform-variation.tex}
}
\\
\hspace{-1cm}
\subfloat[Maple]
{
\includegraphics[scale=0.16]{maple_journal_p04_r07_2.pdf}
}
\subfloat[Exponential]
{
\includegraphics[scale=0.36]{r07p04aplha2UNIF-EXP-fixedC-appli0-platform-variation.tex}
}
\subfloat[Weibull $k=0.7$]
{
\includegraphics[scale=0.36]{r07p04alpha2UNIF-WEIBULL-07-fixedC-appli0-platform-variation.tex}
}
\subfloat[Weibull $k=0.5$]
{
\includegraphics[scale=0.36]{r07p04alpha2UNIF-WEIBULL-05-fixedC-appli0-platform-variation.tex}
}
\caption{Waste (y-axis) for the different heuristics as a function of the
  platform size (x-axis), with $\precision=0.4$, $\recall=0.7$,  $\Cp=\Cr$  (first row), $\Cp=0.1 \Cr$ (second row), or $\Cp=2 \Cr$ (third  row) and with a trace of false predictions parametrized by a uniform distribution..}
	\label{fig.04.07.appendix}
\end{figure*}

\end{document}